\newcommand{\g}{\mathbf{g}}
\newcommand{\w}{\mathbf{w}}
\newcommand{\x}{\mathbf{x}}
\newcommand{\y}{\mathbf{y}}
\newtheorem{thm}{Theorem}
\newtheorem{lem}[thm]{Lemma}
\newtheorem{defn}[thm]{Definition}
\begin{document} 

\newcommand{\kliu}[1]{{\color{blue}{\bf Kevin:} #1}}

\title{GADGET: Online Resource Optimization for Scheduling Ring-All-Reduce Learning Jobs}

\author{Menglu Yu$^{1}$ \mbox{\hspace{0.4cm}} Ye Tian$^{1}$ \mbox{\hspace{0.4cm}} Bo Ji$^{2}$ \mbox{\hspace{0.4cm}} Chuan Wu$^{3}$ \mbox{\hspace{0.4cm}} Hridesh Rajan$^{1}$  \mbox{\hspace{0.4cm}} Jia Liu$^{4,1}$
\\ $^{1}$Department of Computer Science, Iowa State University
\\ $^{2}$Department of Computer Science, Virginia Tech
\\ $^{3}$Department of Computer Science, The University of Hong Kong
\\ $^{4}$Department of Electrical and Computer Engineering, The Ohio State University
\thanks{
This work has been supported in part by NSF grants CAREER CNS-2110259, CNS-2112471, CNS-2102233, CCF-2110252, ECCS-2140277, CNS-2112694, CCF 1934884, CNS 2120448, a Google Faculty Research Award, and Hong Kong RGC grants HKU 17204619, 17208920, 17207621.
}
}


\maketitle


\begin{abstract}
Fueled by advances in distributed deep learning (DDL), recent years have witnessed a rapidly growing demand for resource-intensive distributed/parallel computing to process DDL computing jobs.
To resolve network communication bottleneck and load balancing issues in distributed computing, the so-called ``ring-all-reduce'' decentralized architecture has been increasingly adopted to remove the need for dedicated parameter servers.
To date, however, there remains a lack of theoretical understanding on how to design resource optimization algorithms for efficiently scheduling ring-all-reduce DDL jobs in computing clusters.
This motivates us to fill this gap by proposing a series of new resource scheduling designs for ring-all-reduce DDL jobs.
Our contributions in this paper are three-fold:
i) We propose a new resource scheduling analytical model for ring-all-reduce deep learning, which covers a wide range of objectives in DDL performance optimization (e.g., excessive training avoidance, energy efficiency, fairness);
ii) Based on the proposed performance analytical model, we develop an efficient resource scheduling algorithm called GADGET (\underline{g}reedy ring-\underline{a}ll-reduce \underline{d}istributed \underline{g}raph \underline{e}mbedding \underline{t}echnique), which enjoys a provable strong performance guarantee;
iii) We conduct extensive trace-driven experiments to demonstrate the effectiveness of the GADGET approach and its superiority over the state of the art.
\end{abstract}

\section{Introduction} \label{sec:intro}

In recent years, the rise of complex deep learning applications has led to a rapidly growing demand for resource-intensive (e.g., GPUs, memory, energy) distributed/parallel computing to process deep learning training tasks.
Traditionally, most distributed deep learning (DDL) frameworks are based on the parameter server (PS)-worker architecture, which consists of a set of PS(s) and workers. 
Despite its simplicity, the PS-worker architecture suffers from two scalability limitations:
i) The topology of the PS-worker architecture creates a {\em communication bottleneck} at each PS as the number of workers increases; and
ii) The centralized PSs are vulnerable to the {\em single-point-of-failure risk}.
To overcome these scalability weaknesses, the more sophisticated {\em ``ring-all-reduce''} (RAR) parallel computing architecture has become increasingly popular for DDL training and has been supported by many mainstream DDL frameworks (e.g., Tensorflow~\cite{Abadi16:TensorFlow}, Pytorch~\cite{Paszke19:Pytorch}).
Specifically, by forming a ring between the workers to jointly perform parameter sharing and reduction, the RAR  architecture removes the need for dedicated PS(s), hence alleviating the single point of failure. 

However, with the increasing adoption of the RAR architecture for DDL training, an important question naturally emerges: {\em How could we design 
resource optimization algorithms to efficiently schedule RAR-based DDL training jobs over networked computing clusters?}
Answering this question is critical because:
i) Multi-core high-throughput GPU hardware for cloud-based machine learning services is expensive, which requires efficient GPU utilization.
For example, an Amazon EC2 eight-core GPU instance with NVLink connection costs more than \$31 per hour\cite{P3_COST:2021};
ii) Due to the resource competition among multiple RAR-based DDL training jobs in the cluster,
if scheduling is not done strategically, DDL training jobs could suffer from large latency;
and iii) RAR-based DDL jobs are often trained by variants of the {\em iterative} stochastic gradient descent (SGD) method to optimize thousands or even millions of parameters. 
Their completion times are defined by the convergence processes of these SGD-based methods, which often exhibit the {\em ``diminishing return effect''} in terms of the gains in training accuracy as the number of iterations increases.
For instance, it has been shown in \cite{Jeon2018:multi-tenant} that approximately 75\% of DDL jobs reach within 0.1\% of the lowest training loss using only 40\% of the epochs.
Allocating too many computing resources to one job not only induces high GPU and energy costs as well as unfairness to other DDL jobs, but also leads to unnecessary training time with little learning accuracy performance gain in return.

However, optimizing scheduling and resource allocation for RAR-based DDL is highly non-trivial due to several technical challenges.
{\em First}, the hop-by-hop dependence in the ring structure renders the placement of workers highly sensitive to intra- and inter-server communications, decided by the underlying computing network topology.
Also, as will be shown later, the ring topology violates the loop-less assumption of many existing algorithms for virtual computing resource allocation problems \cite{Rost18:VNE, Avin18:CoRR, Rost19:Sigcomm}, which necessitates new algorithmic design techniques.
{\em Second}, the 
resource allocation for each RAR-based DDL job is subject to packing-type constraints (due to resource limits), which implies NP-Hardness.
{\em Lastly}, the scheduler is not aware of DDL job arrivals beforehand, which calls for {\em online} optimization algorithm design.
Perhaps due to these challenges, to date, results on scheduling and resource allocation for RAR-based DDL training remain limited.
This motivates us to fill this gap by proposing an efficient resource scheduling algorithm called GADGET (\underline{g}reedy ring-\underline{a}ll-reduce \underline{d}istributed \underline{g}raph \underline{e}mbedding \underline{t}echnique), which addresses the aforementioned technical challenges with strong theoretical performance guarantees.
The main results in this paper and their significance are summarized as follows:

\begin{list}{\labelitemi}{\leftmargin=1em \itemindent=0em \itemsep=.2em}

\item By extracting the key architectural features of RAR-based DDL training jobs, we develop a new analytical model for scheduling RAR-based DDL jobs over networked environments.
Based on this model, we formulate a general online performance optimization framework for RAR-based DDL training.
We note that, due to the heterogeneous internal (i.e., between containers in the same physical server) and external (i.e., between physical machines within or beyond the same rack) communications, the resource scheduling and allocation problem for RAR-based DDL training is far more challenging than those for the PS-worker architecture.

\item To address the new challenges arising from RAR-based DDL training, we develop an efficient GADGET algorithm that can provide provably strong performance guarantee.
Our GADGET approach is based on a ``divide-and-conquer'' approach.
Specifically, we first show that the formulated RAR-based training scheduling problem over the temporal domain possesses a submodular structure due to its partition matroid nature, and hence can be decomposed in the temporal domain and solved by a greedy approach with worst-case approximation ratio guarantee.

\item Next, we focus on the decomposed subproblem in each time-slot, which remains NP-Hard due to the packing-type constraints.
We note that the ring-structure of RAR-based DDL jobs renders existing methods ineffective.
To address this challenge, we propose a {\em generalized} virtual graph embedding (G-VNE) technique that guarantees a $\frac{1}{3\Gamma}$-fraction of the maximum utility value, where $\Gamma \geq 1$ is a problem-dependent constant.
Combining this with the submodular property implies that GADGET achieves an $\frac{1}{3\Gamma+1}$ overall competitive ratio.

\item Lastly, we conduct experiments to examine the performance of our GADGET algorithm.
We demonstrate the good approximation ratio of G-VNE, which is the main component of GADGET algorithm by using real-world trace-driven simulations.
We also show that, compared to existing baseline scheduling schemes, our GADGET algorithm can effectively meet the topology constraints, while achieving a good overall performance. 
\end{list}

Collectively, our results contribute to a comprehensive and fundamental understanding of RAR-based machine learning system optimization.
The rest of this paper is organized as follows.
We review related work in Section~\ref{sec:Related}, and present preliminaries to familiarize readers with the necessary background of RAR in Section~\ref{sec:Preliminaries}.
We then present the system model, problem formulation, and an overview of our algorithmic ideas in Section~\ref{sec:model_formulation}.
We propose our online resource scheduling algorithm by decoupling it into time-independent subproblems in Section~\ref{sec:alg}, and then solve the NP-hard subproblem with our G-VNE approach in Section~\ref{sec:VNE}.
We evaluate performance of our proposed algorithms through numerical experiments in Section~\ref{sec:numerical}, and conclude this paper in Section~\ref{sec:conclusion}.


\section{Related Work} \label{sec:Related}
Due to the rise of deep learning and their intensive computation workload, scheduling optimization for DDL to expedite the training process has attracted increasing attention recently.
To date, the PS-worker architecture~\cite{Li14:parameter-server} has been widely adopted and its scheduling design has been relatively well studied (e.g., \cite{Chilimbi14:Adam,Yan15:MLFramework_KDD, Sun17:Dorm, Bao18:OASiS}).
However, as pointed out in Section~\ref{sec:intro}, the PS-worker architecture suffers from communication bottlenecks and reliability limitations.
Thanks to its better scalability compared to the PS-worker architecture~\cite{Sergeev18:Horovod}, the RAR architecture has received strong interest in the research community and has been recently adopted by modern DDL frameworks.
So far, however, results on scheduling designs for the RAR architecture remain scarce.
To our knowledge, PACE~\cite{Bao20:all-reduce} is the only existing work in the literature designed for all-reduce tensors 
based on the RAR architecture, aiming at maximizing the overlap between communication and computation using DAG (directed acyclic graph) of DNN training.
However, the goal of PACE is to speed up the training process of a {\em single job}, instead of optimizing the scheduling of multiple jobs to improve the system-wide performance (e.g., minimize the average completion time).
In contrast, our goal in this work is to design the {\em first} scheduler tailored for the RAR-based DDL jobs in computing clusters.
We propose a theoretical framework that enables rigorous RAR-based DDL training resource optimization in large-scale 
computing clusters (typically with a multi-layer hierarchical topology, e.g., fat-tree).

We note that there also exists other lines of research on resource scheduling for optimizing DDL training performance (e.g., latency, energy efficiency), but they are agnostic to the underlying distributed/parallel architectures.
Here, we also provide a quick overview, although they are not directly comparable to our work.
In~\cite{Gu19:Tiresias}, a GPU scheduler called Tiresias is proposed based on the assumption that DDL job performance could be estimated from historical job duration information.
Meanwhile, a resource provisioning method called Cynthia was developed in~\cite{Zheng19:Cynthia} to study the impact of system heterogeneity, i.e., servers in the system may have different hardware configuration (e.g., CPU, amounts of memory), on both synchronous and asynchronous training, and to optimize the training performance.
Also, due to the growing training workloads that incur huge energy consumption in the GPU clusters, the design of energy-efficient scheduling algorithms also receives significant interest recently. 
For example, in~\cite{Mei17:energy,Chau17:energy}, various scheduling schemes are proposed for CPU/GPU hybrid clusters, aiming at maximizing the energy efficiency without significantly sacrificing the system performance.
We note, however, that all of these DDL scheduling algorithms are heuristic methods that do not provide performance guarantees.


\begin{figure}[t!]
\includegraphics[width=.5\textwidth]{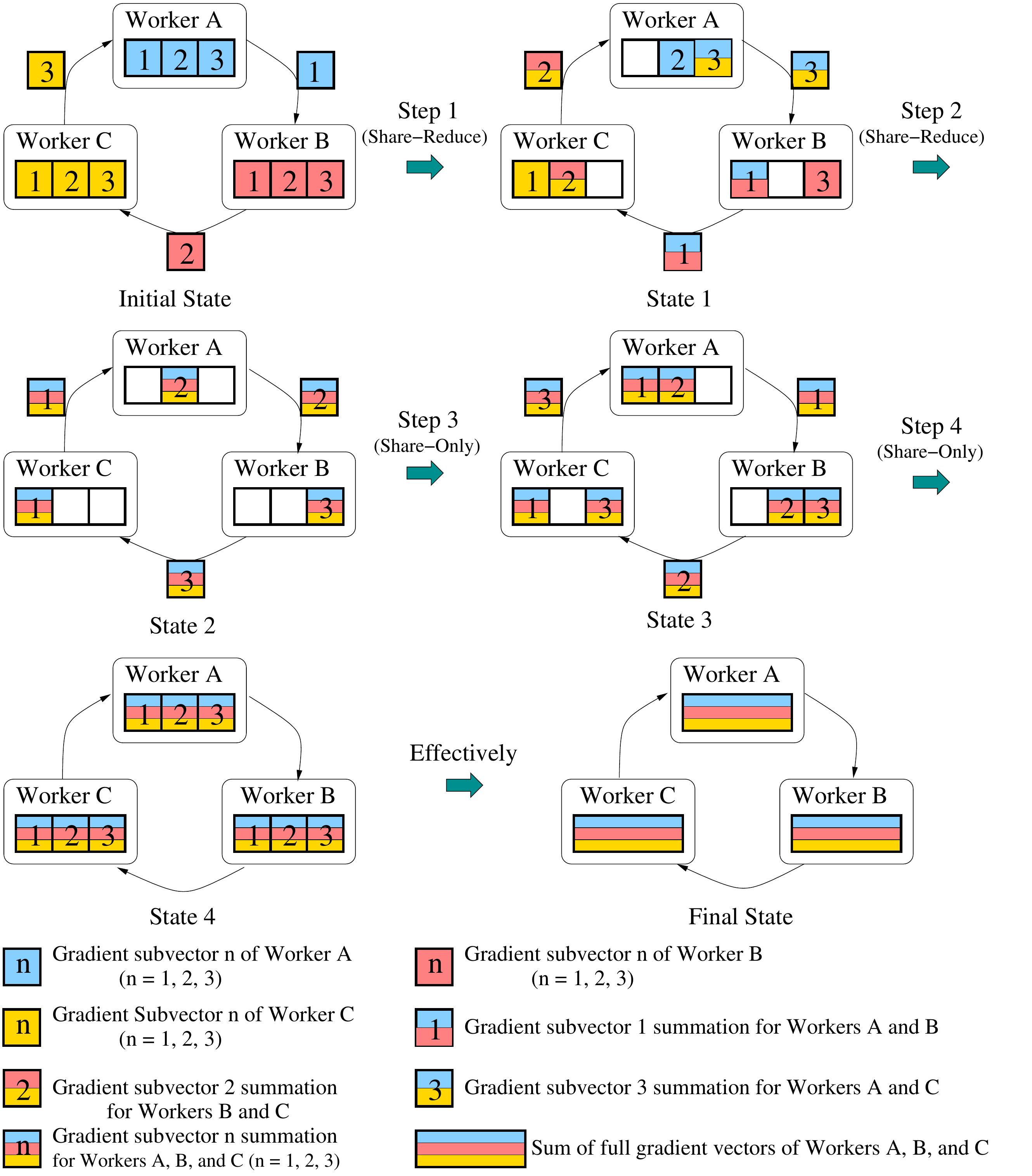}
\vspace{-.1in}
\caption{A three-worker illustrative example of the ring-all-reduce (RAR) process.}
\label{fig:ringallreduce}
\vspace{-.2in}
\end{figure}

\section{Distributed Learning with Ring-All-Reduce: A Primer}\label{sec:Preliminaries}

In this section, we  provide an overview on DDL training based on the RAR parallel computing architecture, to familiarize readers with necessary background and fix terminologies that will be used in this paper.

{\bf 1) Distributed Stochastic Gradient Descent (SGD):}
At the heart of deep learning lies an optimization problem in the form of $\min_{\w \in\mathbb{R}^d} \bar{L}(\w) \triangleq \frac{1}{N} \sum_{i=1}^{N} L(\w,\xi_i)$, where $\w$ contains the model parameters to be learned, $L(\w,\xi_i)$ is a loss function, 
and $N$ is the total number of samples.
In an $w$-worker DDL system, the dataset is often partitioned and trained by each worker.
Let $\mathcal{D}_j$ denote the $j$-th partition of the dataset trained by a worker $j\in \{1,\ldots,w\}$.
Then, the training problem can be decomposed as $\min_{\w \in\mathbb{R}^d} \bar{L}(\w) = \sum_{j=1}^{w} \frac{|\mathcal{D}_j|}{N} L_j(\w)$, where $L_j(\w) \triangleq \frac{1}{|\mathcal{D}_j|} \sum_{i \in \mathcal{D}_{j}} L(\w,\xi_i)$.
To date, most DDL systems in practice adopt the distributed stochastic gradient descent (SGD) method, where, in iteration $k$, the weight parameter vector $\w$ is updated as 
$\w_{k+1} = \w_{k} - \eta_k \big( \sum_{j=1}^{n} \frac{|\mathcal{D}_j|}{N} \g_j^k \big)$, $k=1,2,\ldots$. 
Here $\eta_k$ is the step-size (aka learning rate) in iteration $k$, and $\g_j^k$ denotes the stochastic gradient computed by worker $j$ in iteration $k$.

\smallskip
{\bf 2) The Ring-All-Reduce (RAR) Architecture:}
The above SGD update requires a weighted sum of all stochastic gradients, $\g_j$, $\forall j \in \{1,\ldots,w\}$ (the iteration index $k$ is omitted for notational simplicity).
To compute this weighted sum in the PS-worker architecture, each worker $j$ simply sends $\g_j$ to the PSs, which then perform the summation and return the result to each worker.
However, this implies that a $2wd$ amount of data exchange per iteration is required under the PS-worker architecture, which scales {\em linearly} as the number of workers increases and is problematic in large-scale DDL training.

To address this limitation, the more sophisticated RAR architecture is proposed.
The basic idea of RAR is to form a {\em ring} between the workers, where each worker performs gradient reduction (e.g., summation) and sharing by receiving gradients from its upstream worker and sending the local reduction result to its downstream worker.
In general, for a $w$-worker RAR structure, each worker splits its gradients into $w$ sub-vectors (see Fig.~\ref{fig:ringallreduce} for an example for $w=3$).
The RAR process is divided into two phases.
The first is the {\em ``Share-Reduce''} phase $(t=1,\ldots,w-1)$, where each worker sends its reduced sub-vector (i.e., the sub-vector sum) to its downstream worker, while receiving the reduced sub-vector from its upstream worker to compute a new reduced sub-vector.
The second phase is the {\em ``Share-Only''} phase  $(t=$ $w,\ldots,2w-2)$, where each node sends its newly received fully-reduced sub-vector to its downstream worker, while receiving a new fully-reduced sub-vector from its upstream worker.
Since each worker sends $\frac{d}{w}$ amount of data for $2(w-1)$ times, the total amount of data each worker sends is $\frac{2d(w-1)}{w}$, which is asymptotically {\em independent} of $w$ as $w$ gets large. 

\smallskip
{\bf 3) Per-Iteration Training Time for RAR-Based DDL:}
Consider a $w$-worker RAR-based DDL training job.
We use $b$ to denote the bandwidth between workers.
We use $G$ to denote the computation speed of a worker.
Since each worker sends a gradient sub-vector of size $d/w$ at each time step for $2(w-1)$ times, the transmission time can be computed as $\frac{2d(w-1)}{wb}$.
Also, since it takes $w-1$ times in total to perform gradient sub-vector summations, the computation time can be computed as $\frac{d(w-1)}{wG}$.
Hence, the total time for a single RAR operation is $\frac{d(w-1)}{w} [\frac{2}{b}+\frac{1}{G} ]$.
In each iteration of RAR-based DDL training, in addition to the all-reduce operation time, each worker needs to perform a forward pass (FP) and a backward pass (BP) to compute the stochastic gradients.
The FP time can be computed as $t^{f}M$, where $t^f$ is the model-dependent per-sample FP time and $M$ is mini-batch size.
The BP time $t^b$ is independent of the mini-batch size.
In addition, there is extra latency $\gamma$ caused by communication overhead (e.g., ACK time for message transmission, negotiation time among all workers before conducting RAR~\cite{Sergeev18:Horovod}). 
By putting all the above time consumption components together, we have the per-iteration training time $\tau$ for RAR-based DDL as:
\begin{align} \label{per-iteration-time}
\tau = \frac{d(w-1)}{w} \bigg[\frac{2}{b}+\frac{1}{G} \bigg]+t^fM+t^b + \gamma. \footnotemark
\end{align}
\footnotetext{In this paper, we focus on GPU computation and network communication times and neglect the data IO latency due to the use of high-speed solid state drives (SSDs), short pre-fetch time of pre-stored training programs, and overlapping of data chunks fetching and training through pipelining.}
We can see that $\tau$ depends on the learning model size, worker communication and computation speeds, batch size, communication overhead, and FP/BP times (in turn, the learning model), but is asymptotically upper bounded by $d(\frac{2}{b}+\frac{1}{G}) +t^fM+t^b + \gamma$ as the number of workers $w$ goes to infinity.


\section{System Model and Problem Formulation} \label{sec:model_formulation}\label{sec:general_framework}

%
In this section, we will present a general analytical framework for resource scheduling and allocation for RAR-based DDL training performance optimization.
%


Consider a 
computing cluster with a set of physical servers denoted as $\mathcal{S}$.
Each server has a certain amount of computing resources (e.g., GPUs, memory).
The system is time-slotted with horizon $\mathcal{T} \triangleq \{1, 2, \cdots, T\}$.
Over time, RAR-based DDL jobs dynamically arrive and compete for the shared resources with unfinished jobs.
Let $a_{i}$ denote the arrival time of job $i$, which is {\em unpredictable} to the scheduler.
Let $\mathcal{I}$ denote the set of jobs arrived over time horizon $\mathcal{T}$.
The jobs are {\em preemptive} if resources are insufficient and could be resumed later.\footnote{Flexible resource allocation enabled by containers can be exploited to suspend a job and reclaim its resources without losing the execution progress~\cite{Chen17:USENIX}.}
All jobs' workers are implemented as containers. 
Next, we will develop RAR-based DDL scheduling models in detail.

\smallskip
{\bf 1) Resource Constraint Modeling:} 
We let $y_{is}[t]$ denote the number of workers scheduled on server $s$ for job $i$ in time-slot $t$.
Let $N_i$ be the largest number of assigned concurrent workers in each time-slot, we have: 
\begin{align}
 \label{ctr:maxworker}\sum_{s\in\mathcal{S}}y_{is}[t]\leq N_i, \quad\forall i\in\mathcal{I},  \forall t \in \mathcal{T}. 
\end{align}

Let $\mathcal{R}$ denote the set of computing resources (e.g., GPUs, memory, training time budget).
Let $l_i^r$ be the amount of type-$r$ resource consumed by each worker of job $i$.
We use $F_i^r$ to denote the maximum type-$r$ resource demand requested by job $i$.
To ensure that job $i$ does not exceed type-$r$ resource's limit, we have:
\begin{align}
    \label{ctr:iter}&\sum_{t\in\mathcal{T}}l_i^r \sum_{s\in\mathcal{S}}y_{is}[t]\leq F_i^r,\quad\forall i\in\mathcal{I}, \forall r \in \mathcal{R}.
\end{align}

Let $C_s^r$ denote the type-$r$ resource capacity of server $s$.
To ensure that server's type-$r$ limit is not violated, we have:
\begin{align}
    \label{ctr:cap}&\sum_{i\in\mathcal{I}} l_i^r y_{is}[t]\leq C_s^r, \quad\forall s\in\mathcal{S}, t\in\mathcal{T}, \forall r \in \mathcal{R}.
\end{align}

We use a binary variable $x_{is}[t]=1$ to indicate that job $i$ uses server $s$ in each time-slot $t$ and let $x_{is}[t]=0$ if otherwise.
Note that, when server $s$ is used for job $i$, $y_{is}[t]$ should not exceed any maximum resource demand $F_i^r$,  the resource capacity of server $s$, and the largest number of workers of job $i$ in each time-slot $t$.
Also, observe that $x_{is}[t]=0$ forces $y_{is}[t]=0$. 
Combining these facts yields:
\begin{align}
\label{ctr:placement-r}y_{is}[t]\!\leq\!x_{is}[t]\min \bigg\{N_i,\frac{C_s^r}{l_i^r}, \frac{F_i^r}{l_i^r}, \forall r \bigg\}, \forall i\!\in\!\mathcal{I},s\!\in\!\mathcal{S},t\!\in\!\mathcal{T}. \!\!\!
\end{align}
To ensure no workers are allocated before jobs arrive, we have:
\begin{align}
    &y_{is}[t]=0, \quad \forall i\in\mathcal{I}, s\in\mathcal{S}, t<a_i.
\end{align}

\smallskip
{\bf 2) RAR Topological Constraint Modeling:} 
A key component in scheduling an RAR-based DDL training job is to guarantee that the {\em physical network topology} corresponding to the resource scheduling decisions should be {\em compatible} with the {\em logical topology} of the job's computational graph.
Toward this end, we note that the computational graph of an RAR-based DDL training job $i$ is a {\em directed ring graph} $G_i=(V_i, E_i)$, where $V_i$ is the set of logical vertices representing workers, and $E_i$ represents the set of logical edges denoting the RAR directions between the workers.
Also, physical servers and network links in the cluster can be modeled as a directed substrate graph. 
Next, we model the topological constraints.

We use $\mathcal{P}_{ss'}[t]$ to denote the set of all possible paths between severs $s$ and $s'$ in the physical substrate graph in time-slot $t$.
We use a binary variable $r_{ss'}^{(p,i)}[t]=1$ to indicate that a path $p\in \mathcal{P}_{ss'}[t]$ is used by job $i$ in time-slot $t$, and let $r_{ss'}^{(p,i)}[t]=0$, otherwise.
If neither server $s$ nor $s'$ hosts any worker of job $i$ in time-slot $t$, then $r_{ss'}^{(p,i)}[t]=0$, which implies:
\begin{align}
\label{ctr:link}r_{ss'}^{(p,i)}[t]\leq x_{is}[t]x_{is'}[t], \forall s,s'\!\in\!\mathcal{S}, p\!\in\!\mathcal{P}_{ss'}[t], i\!\in\!\mathcal{I}, t\!\in\!\mathcal{T}.
\end{align}
Also, we use $b_i$ and $B_{\min}^p[t]$ to denote the reserved bandwidth requirement of job $i$ and the bottleneck link capacity of path $p \in \mathcal{P}_{ss'}[t]$, respectively.
To ensure that the bottleneck link capacity in any activated path $p$ is not exceeded, we have:
\begin{align}
\label{ctr:inter}&\sum_{i\in\mathcal{I}}r_{ss'}^{(p,i)}[t]b_i \leq B^p_{\min}[t], \forall s,s'\in\mathcal{S}, p\in \mathcal{P}_{ss'}[t], t\in\mathcal{T}.
\end{align}

  \begin{figure}[t!]
  \includegraphics[width=.52\textwidth]{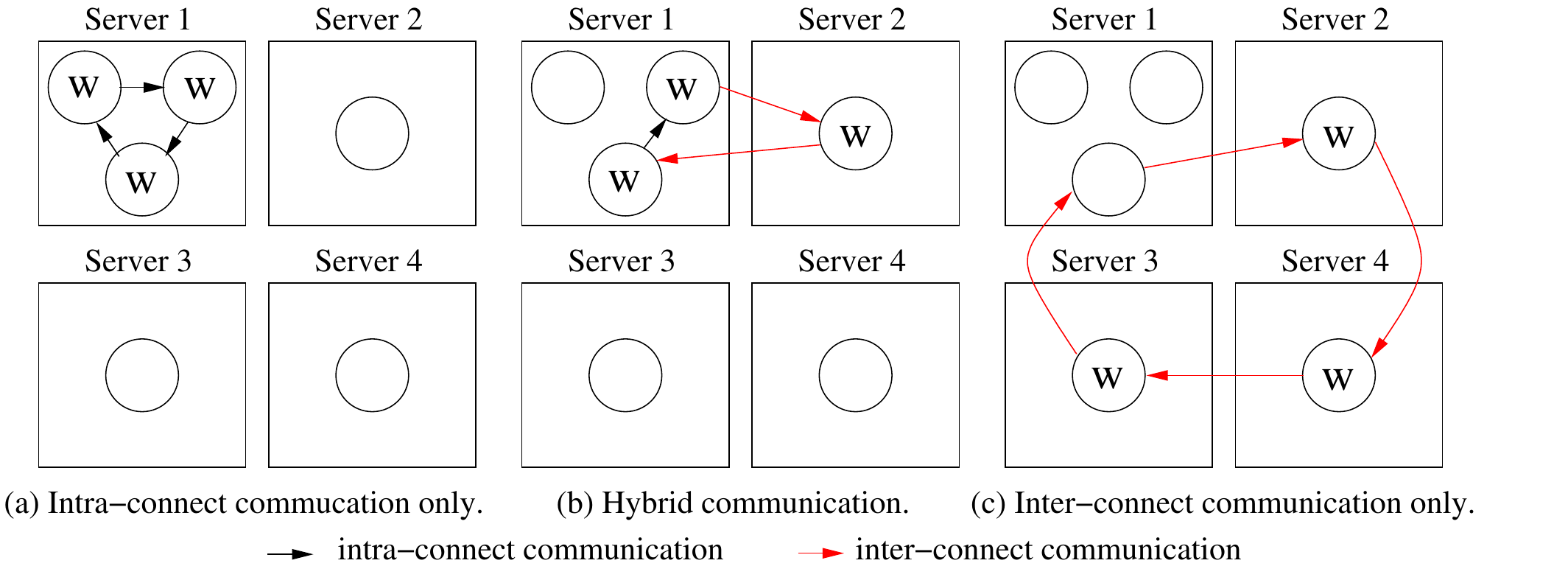}
  \caption{Various placement schemes with three workers.}
  \label{fig:placement}
  \vspace{-.2in}
\end{figure}

Next, to ensure that the allocated workers of job $i$ can indeed form a directed ring graph, we consider two cases.
First, if only one server hosts all workers for job $i$, then the cyclic constraint is automatically satisfied since a server can support any computational graph topology for co-located workers (Fig.~\ref{fig:placement}(a)).{\footnote{Servers in Fig.~\ref{fig:placement} are connected cyclically, i.e., there are communication links between servers 1 and 2, 2 and 4, 4 and 3, 3 and 1.}
Second, if workers are located on different servers, it can be observed that each server that hosts some of job $i$'s workers (i.e.,  $x_{is}[t]=1$) should have a degree of {\em exactly two} for job $i$'s paths in order to form a cycle (Fig.~\ref{fig:placement}(b)-(c)).
These two cases can be compactly written in one constraint as:
\begin{align} 
&\mathds{1}\bigg\{\sum_{s'\in\mathcal{S}}x_{is'}[t] \!-\!1\!>\!0 \bigg\} \bigg(\sum_{s' \in \mathcal{S}_i[t]} \sum_{p\in\mathcal{P}_{ss'}[t]}r_{ss'}^{(p,i)}[t]\!-\!2\bigg) = 0, \nonumber\\
  \label{ctr:cycle}&\hspace{1.5in} \quad\forall i\in\mathcal{I}, s\in\mathcal{S}, t\in\mathcal{T}.
\end{align}
where $\mathcal{S}_i[t]$ denotes the set of servers containing job $i$ at time $t$.
To see why Eq.~(\ref{ctr:cycle}) is a valid constraint, note that if all workers of job $i$ are hosted by server $s$, we have $\mathds{1} \{\sum_{s'\in\mathcal{S}}x_{is'}[t] - 1 > 0\} = 0$, which implies Eq.~\eqref{ctr:cycle} trivially holds.
Note also that the first summation in this term is over the set $\mathcal{S}_i[t]$.
This guarantees that there is only one big cycle instead of multiple small cycles.

\smallskip
{\bf 3) Objective Function and Problem Statement: } 
Let $\mu_{i}(\cdot)$ be the utility function associated with job $i$'s resource allocation, which is {\em non-decreasing} and concave to represent the ``diminishing return effect.''
In this paper, our goal is to maximize the overall utility of all jobs.
Let $\zeta_i$ represent some general notion of ``per-worker efficiency/cost'' (see three concrete examples next). 
Putting all modeling together, the DDL job scheduling problem (DDLJS) can be formulated as:
\begin{align*}
{\bf DDLJS: } \underset{\mathbf{x,y}}{\text{Maximize }} &\sum_{i\in\mathcal{I}}\mu_i \Big( \zeta_i \!\! \sum_{t\in\mathcal{T}} \sum_{s\in\mathcal{S}}y_{is}[t] \Big)\\
    \text{subject to } 
    &\text{Constraints }  (\ref{ctr:maxworker})-(\ref{ctr:cycle}).  
\end{align*}


We note that Problem~DDLJS is a general analytical framework that has many applications.
Here, we provide three examples to highlight its practical relevance:
{\bf 1) Excessive Training Avoidance~\cite{Jeon2018:multi-tenant}:} 
Here, $\zeta_i$ can represent the number of training iterations per-unit time of each worker of job $i$, which can be obtained by inverting Eq.~(\ref{per-iteration-time}).
The utility function can be chosen as $\mu(k) = (O(1/\sqrt{k}))^{-1} = C\sqrt{k}$ for some $C>0$, which is the typical convergence rate of SGD-type algorithms with respect to the iteration index $k$.
{\bf 2) Energy-Efficiency Optimization~\cite{Chau17:energy}:} 
Here, $\zeta_i$ denotes the per-worker power consumption of job $i$.
We can choose $\mu(\cdot) = -c(\zeta_i \sum_{t\in\mathcal{T}} \sum_{s\in\mathcal{S}}y_{is}[t])$, where $c(\cdot)$ is a quadratic function, which is often used in the power system literature to model energy consumption costs.
{\bf 3) Resource Fairness in Training:} Here, we can let $\zeta_i = 1$, $\forall i$, and adopt the classical ``proportional fairness'' utility function~\cite{Hou14:Proportional-Fair}, i.e., $\mu(\cdot) = \log(\cdot)$.


\section{Solution Approach} \label{sec:alg}

Problem~DDLJS is a challenging {\em online} optimization problem (the scheduler does not have arrival information $\{a_i, \forall i\}$).
What exacerbates the problem is the fact that even its offline planning version (assuming $\{a_i, \forall i\}$ are known beforehand) is a mixed-integer non-convex programming (MINCP) problem, which is NP-Hard in general.
Moreover, Eq.~(\ref{ctr:cycle}) contains an indicator function that is not amenable to conventional optimization techniques.
In light of all these challenges, our goal in this paper is to pursue an online approximation algorithmic design that offers theoretical competitive ratio guarantee.

\subsection{Basic Idea}

To overcome the above challenges, we propose an online algorithmic design called GADGET (\underline{g}reedy ring-\underline{a}ll-reduce \underline{d}istributed \underline{g}raph \underline{e}mbedding \underline{t}echnique).
Our basic idea of GADGET contains two key steps:
i) Through a careful examination, we show that Problem~DDLJS is submodular with respect to scheduling decisions in the {\em temporal} domain.
Thus, it is possible to design a temporally greedy scheduling algorithm with a provable competitive ratio;
ii) For the resource allocation subproblem in each time-slot that remains NP-Hard, we show that it can be viewed as a {\em generalized} virtual network embedding problem (G-VNE).
As a result, it is also possible to (non-trivially) modify existing VNE techniques (see, e.g,. \cite{Rost18:VNE} and references therein) to adapt to our setting and solve each subproblem with provable approximation ratio guarantee.
We illustrate the basic idea of our GADGET algorithm in Fig.~\ref{fig:idea}.
In what follows, we will discuss these two key steps in detail.
\begin{figure}[t!]
  \includegraphics[width=1\columnwidth]{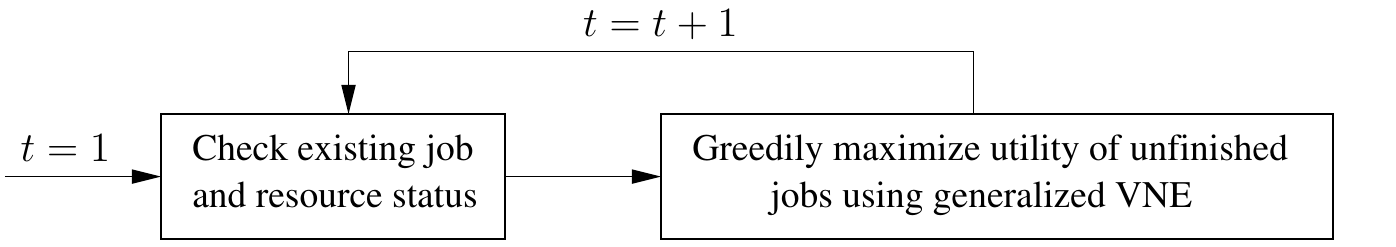}
  \caption{Algorithmic idea overview.}
  \label{fig:idea}
  \vspace{-.2in}
\end{figure}


\vspace{-.1in}
\subsection{An Online Temporally Greedy Approach}\label{subsec:greedy}
In this subsection, we establish the submodularity of Problem~DDLJS over the temporal domain.
First, consider the following online {\em temporally greedy} algorithm.
We let $z_{i,t}\triangleq\sum_{\tau=1}^t \sum_{s\in\mathcal{S}}y_{is}[\tau]$ be the {\em accumulative} number of worker-time allocated for job $i$ up to time $t$.
Define $\mathcal{I}[t]\triangleq\{i \in \mathcal{I}: t\geq a_i\text{ and } z_{i,t-1} < \min_{r\in\mathcal{R}} F_i^r/l_{i}^{r} \}$ as the set of jobs that are {\it active} (i.e., in-training and not resource-violated in time-slot $t$). 
Given the scheduling $\{z_{i,t-1}\}_{i\in\mathcal{I}[t]}$ in previous time slots, we greedily find the resource allocation in time-slot $t$ by solving the following optimization problem:
\begin{align}
\label{problem:pt}\text{ Maximize }
      &\sum_{i\in\mathcal{I}[t]}\mu_i \Big( \zeta_i\sum_{s\in\mathcal{S}}y_{is}[t]+\zeta_iz_{i,t-1} \Big)\\
    \text{subject to } 
    &\label{ctr:iter_t} \sum_{s\in\mathcal{S}}y_{is}[t]\leq \Big(\min_{r\in\mathcal{R}} \frac{F_i^r}{l_{i}^{r}} \Big)\!-\!z_{i,t-1}, \,\forall i\!\in\!\mathcal{I}[t],\\
    &\nonumber\text{Constraints }(\ref{ctr:maxworker}), (\ref{ctr:cap})-(\ref{ctr:cycle}) \text{ only at $t$ }, \forall i\!\in\!\mathcal{I}[t], \\ 
    &\nonumber y_{is}[t] \in \mathbb{Z}_{+}, \,\, x_{is}[t]\!\in\!\{0,1\},\forall i\!\in\!\mathcal{I}[t], s\in\mathcal{S}\nonumber.
\end{align}
Constraint~(\ref{ctr:iter_t}) ensures that the accumulated ``worker-time product'' in time slot $t$ does not exceed the remaining worker-time-product limit (determined by the bottleneck resource type) for all active jobs.
Note that Problem~(\ref{problem:pt}) remains a challenging NP-Hard packing problem, for which an approximation algorithm will be developed later in Section~\ref{sec:VNE}.
With Problem~(\ref{problem:pt}), our online temporally greedy algorithm is presented in Algorithm~\ref{alg:fractional_scheduling}.

\begin{algorithm}
\SetAlgoLined
\label{line:intialization}\textbf{Initialization:} Set $z_{i,t}\leftarrow0,\forall t\in\mathcal{T}, i\in\mathcal{I}$\;
 \label{line:iterate}\For{$t\in\mathcal{T}$}{
  \label{line:J[t]} $\mathcal{I}[t]\!\triangleq\!\{i\in\mathcal{I}: t\geq a_i\text{ and }z_{i}[t\!-\!1] \!\!< \!\min_{r\in\mathcal{R}} F_i^r/l_i^r \}$\;
  \label{line:subproblem1}$y_{is}[t]\leftarrow$ Solutions of Problem~(\ref{problem:pt}) using Algorithm~\ref{alg:round} developed in Section~\ref{sec:VNE}\;
  \Return $y_{is}[t]$\;
  \label{line:update_z}$z_{i,t} \leftarrow z_{i,t-1} + \sum_{s\in\mathcal{S}}y_{is}[t]$\;
  }
 \caption{Online Temporally Greedy Approach.}
 \label{alg:fractional_scheduling}
\end{algorithm}

Next, we show that Algorithm~\ref{alg:fractional_scheduling} provides competitive ratio guarantee by proving that Problem~DDLJS is submodular over the temporal domain.
For the paper to be self-contained, we restate some necessary basics of submodular optimization and matroid theory here, and refer readers to standard sources of submodular optimization (e.g.,~\cite{Schrijver03:combinatorial}) for further details.

\begin{defn}[Submodularity]
\label{def:3}A set function $f(\cdot): 2^\mathcal{V}\rightarrow\mathds{R}$ is submodular if for every $\mathcal{B}\subseteq\mathcal{V}$, and $\mathcal{A}'\subseteq\mathcal{A}\subseteq\mathcal{V}$, we have $f(\mathcal{A}\cup\mathcal{B}) - f(\mathcal{A})\leq f(\mathcal{A}'\cup\mathcal{B}) - f(\mathcal{A}')$.
\end{defn}
An important subclass of submodular functions are those that have the monotone property defined as follows:
\begin{defn}[Monotonicity]
A set function $f(\cdot): 2^\mathcal{V}\rightarrow\mathds{R}$ is monotone if for every $\mathcal{A}\subseteq\mathcal{B}\subseteq\mathcal{V}$, we have $f(\mathcal{A})\leq f(\mathcal{B})$.
\end{defn}
In this paper, we focus on non-negative monotone submodular functions.
Submodular optimization is also closely related to the notion of matroid, which is defined as follows:
\begin{defn}[Matroid]
A matroid is a pair $(\mathcal{V}, \mathcal{J})$ such that $\mathcal{V}$ is a finite set, and $\mathcal{J}\subseteq 2^\mathcal{V}$ is a collection of independent subsets  of $\mathcal{V}$ satisfying: 1) $\emptyset\in\mathcal{J}$; 2) for $\mathcal{A}\subseteq\mathcal{B}\subseteq\mathcal{V}$, if $\mathcal{B}\in\mathcal{J}$, then it implies $\mathcal{A}\in\mathcal{J}$; 3) if $\mathcal{A,B}\in\mathcal{J}$, and $|\mathcal{A}|<|\mathcal{B}|$, then $\exists \,\, v\in\mathcal{B}\setminus\mathcal{A}$ such that $\mathcal{A}\cup\{v\}\in\mathcal{J}$.
\end{defn}
Matroids have many interesting properties and subclasses.
One important subclass of matroids that is useful in this paper is the partition matroid, which is defined as follows:
\begin{defn}[Partition Matroid]
Partition $\mathcal{V}$ into disjoint subsets $\{\mathcal{V}_{j}\}$.
Let $0 \!\leq\! \nu_j \!\leq\! |\mathcal{V}_j|$, $\forall j$ be integers associated with the $\mathcal{V}_j$'s.
Define a collection of subsets $\mathcal{J} \!=\! \{ \mathcal{H} \!\in\! \mathcal{V}: |\mathcal{H} \!\cap\! \mathcal{V}_j| \!\leq\! \nu_j, \forall j\}$.
Then, the pair $(\mathcal{V}, \mathcal{J})$ is a partition matroid.
\end{defn}

We now show that Problem~DDLJS is a submodular optimization problem in the temporal domain.
To avoid ambiguity, we will use the term {\em ``schedule''} to refer to an $(\x,\y)$-decision over the entire time horizon $\mathcal{T}$;
and we use the term {\em ``allocation''} to refer to an $(\x,\y)$-decision in a particular time-slot (i.e., a ``snapshot'' in time).
We let $\mathcal{H} \triangleq \{ \mathcal{H}[t], t\in \mathcal{T} \}$ be the space of all  schedules, where $\mathcal{H}[t]$ denotes the space of all resource allocation in time-slot $t$ (which may or may not be feasible).
Define the ground set  $\mathcal{V} \triangleq \{y_{is}[t], \forall i,s,t | y_{is}[t]\in\mathcal{H}[t], \forall t \} \subset \mathbb{Z}_{+}^{|\mathcal{I}|\times |\mathcal{S}|\times|\mathcal{T}|}$ to be the space of all possible schedules of the $y$-components.
Also, let $\mathcal{V}[t] = \{y_{is}[t], \forall i, s | y_{is}[t]\in\mathcal{H}[t]\} \subset \mathbb{Z}_{+}^{|\mathcal{I}|\times |\mathcal{S}|}$ denote the feasible $y$-allocation space in time-slot $t$.
Clearly, $\{\mathcal{V}[t], t \in \mathcal{T}\}$ is a partition of $\mathcal{V}$ since $\mathcal{V}[t_1] \cap \mathcal{V}[t_2] = \emptyset$.
Now, we choose $\nu_t = 1$, $\forall t \in \mathcal{T}$.
Consider any schedule $\mathcal{E} \in \mathcal{V}$. 
Since at most one resource allocation decision can be chosen in each time slot in $\mathcal{E}$, we have $|\mathcal{E} \cap \mathcal{V}[t]| \leq1 = \nu_t$, $\forall t$.
Let $\mathcal{J}$ be the collection of all such feasible schedules $\{\mathcal{E}\}$.
Then, the pair $(\mathcal{V},\mathcal{J})$ containing all schedules forms a partition matroid, and finding an optimal feasible schedule is equivalent to finding an optimal independent set of this partition matroid.

Next, we let $z_i[t] = \sum_{s\in\mathcal{S}}y_{is}[t]$ denote the total number of workers of job $i$ in time-slot $t$.
Clearly, $z_i[t]=0, \forall t < a_i$.
Recalling the definition of $z_{i,t}$, we have $z_{i,t}=z_i[a_i]+z_i[a_i + 1] + \cdots+z_i[t]$.
For any schedule $\mathcal{E}$ (could be infeasible), we define $z^{\mathcal{E}}_{i}[t] \triangleq \sum_{\{ y_{is}[t] \in \mathcal{E} \cap \mathcal{V}[t] \}} y_{is}[t] $.
Accordingly, we define $z_{i,t}^\mathcal{E} \triangleq z_i^\mathcal{E}[a_i]+\cdots+z_i^\mathcal{E}[t]$.
Then, we can rewrite the objective function in (\ref{problem:pt}) as $F(\mathcal{E}) \triangleq \sum_{i\in\mathcal{I}} \mu_{i}(\zeta_i z_{i,T}^{\mathcal{E}})$.
Since $\mu_{i}(\cdot)$ is concave and increasing, $F(\mathcal{E})$ is monotonically increasing as $t$ increases.
Next, we show that $F(\mathcal{E})$ is also submodular.
\begin{lem}
$F(\mathcal{E})$ is a submodular function.
\end{lem}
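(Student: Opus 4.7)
The plan is to prove submodularity via the equivalent diminishing-returns characterization: show that for any $\mathcal{A}' \subseteq \mathcal{A} \subseteq \mathcal{V}$ and any $v \in \mathcal{V} \setminus \mathcal{A}$,
\[
F(\mathcal{A} \cup \{v\}) - F(\mathcal{A}) \leq F(\mathcal{A}' \cup \{v\}) - F(\mathcal{A}').
\]
This is known to be equivalent to the set-subset formulation in Definition~3 for set functions on a finite ground set, and it is usually the cleanest route when the objective is a separable sum of transformations of counting quantities.

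First, I would reduce to a single job. Since $F(\mathcal{E}) = \sum_{i\in\mathcal{I}} \mu_i(\zeta_i z_{i,T}^{\mathcal{E}})$ and a non-negative sum of submodular functions is submodular, it suffices to show that each map $\mathcal{E} \mapsto \mu_i(\zeta_i z_{i,T}^{\mathcal{E}})$ is submodular. Next, I would establish the key structural fact that $z_{i,T}^{\mathcal{E}}$ is \emph{modular} (i.e., linear) in $\mathcal{E}$. By construction, each $v \in \mathcal{V}$ is an allocation snapshot for a specific time-slot $t_v$, so unpacking the definitions gives $z_{i,T}^{\mathcal{E}} = \sum_{v \in \mathcal{E}} \Delta_i(v)$, where $\Delta_i(v) \triangleq \sum_{s\in\mathcal{S}} y_{is}[t_v]\ge 0$ is the number of workers assigned to job $i$ by snapshot $v$. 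In particular, adding $v$ to any set $\mathcal{E}$ (with $v \notin \mathcal{E}$) increases $z_{i,T}^{\mathcal{E}}$ by exactly the constant $\Delta_i(v)$, independently of $\mathcal{E}$.

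With this modularity in hand, I would finish by applying the classical fact that the composition of a concave non-decreasing scalar function with a non-negative modular set function is submodular. Concretely, define $g_i(z) \triangleq \mu_i(\zeta_i z)$, which inherits concavity and monotone non-decrease from $\mu_i$ because $\zeta_i \ge 0$. Then
\[
\mu_i(\zeta_i z_{i,T}^{\mathcal{A}\cup\{v\}}) - \mu_i(\zeta_i z_{i,T}^{\mathcal{A}}) = g_i\!\left(z_{i,T}^{\mathcal{A}} + \Delta_i(v)\right) - g_i\!\left(z_{i,T}^{\mathcal{A}}\right),
\]
and analogously for $\mathcal{A}'$. Because $\mathcal{A}' \subseteq \mathcal{A}$ and the contributions $\Delta_i(\cdot)$ are non-negative, we have $z_{i,T}^{\mathcal{A}'} \leq z_{i,T}^{\mathcal{A}}$. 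Concavity of $g_i$ then yields $g_i(z_{i,T}^{\mathcal{A}}+\Delta_i(v)) - g_i(z_{i,T}^{\mathcal{A}}) \leq g_i(z_{i,T}^{\mathcal{A}'}+\Delta_i(v)) - g_i(z_{i,T}^{\mathcal{A}'})$, which is the per-job diminishing-returns inequality. Summing over $i \in \mathcal{I}$ gives the claim for $F$.

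The step I expect to require the most care is not any inequality manipulation but the bookkeeping that pins down modularity of $z_{i,T}^{\mathcal{E}}$ in $\mathcal{E}$. The subtlety is that each ground-set element $v$ is itself a full $y$-snapshot for some time-slot $t_v$ (an element of $\mathcal{V}[t_v] \subset \mathbb{Z}_+^{|\mathcal{I}|\times|\mathcal{S}|}$), not a single scalar $y_{is}[t]$; one must be careful that $\mathcal{E}$ may contain several snapshots from the same time-slot (this is precisely why $\nu_t = 1$ is imposed to get a feasible schedule, but the submodularity statement is made over the entire ground set, not just the independent sets). Once this is unpacked, the rest of the argument is a direct application of the ``concave of non-negative modular'' template.
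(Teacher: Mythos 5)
Your proposal is correct and follows essentially the same route as the paper's proof: both reduce to the per-job diminishing-returns inequality for adding a single allocation $v$ and close it with concavity and monotonicity of $\mu_i$ applied to the accumulated worker count, which is exactly the paper's step $(a)$. Your explicit remarks on the modularity of $z_{i,T}^{\mathcal{E}}$ and the ``concave of non-negative modular'' template are just a more careful spelling-out of the same argument, not a different one.
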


\begin{proof}
We pick two schedules $\mathcal{A}$ and $\mathcal{B}$ such that $\mathcal{A}\subseteq\mathcal{B}\subseteq\mathcal{V}$.
Now, for a resource allocation decision $v \in \mathcal{V}[t]$ in time-slot $t$.
Consider the following two cases:
\begin{list}{\labelitemi}{\leftmargin=1em \itemindent=0.em \itemsep=.2em}
\item Case 1) $v\in\mathcal{B}$:
In this case, we have $F(\mathcal{A}\cup\{v\}) - F(\mathcal{A}) \geq F(\mathcal{B}\cup\{v\}) - F(\mathcal{B}) =0$, which holds trivially.

\item Case 2) $v\not\in\mathcal{B}$: In this case, we have
\begin{align*}
&\!\!\!\! F(\mathcal{B} \!\cup\! \{v\}) \!-\! F(\mathcal{B}) \!=\!\! \sum_{i\in\mathcal{I}}\big(\mu_i(\zeta_i z_{i,T}^\mathcal{B} \!+\! \zeta_iz_i^v[t]) \!-\! \mu_i(\zeta_iz_{i,T}^\mathcal{B})\big) \\
 &\!\!\!\!\!\! \overset{(a)}{\leq} \!\!\sum_{i\in\mathcal{I}} \big(\mu_i(\zeta_i z_{i,T}^\mathcal{A} \!+\! \zeta_i z_i^v[t]) \!-\! \mu_i(\zeta_i z_{i,T}^\mathcal{A}) \big) \!\!=\!\! F(\mathcal{A} \!\cup\! \{v\}) \!-\! F(\mathcal{A}),
\end{align*}
\end{list}
where $(a)$ follows from $\mu_i(\cdot)$ being concave, monotone, and increasing.
Then, the proof is complete by Definition~\ref{def:3}.
\end{proof}

The main competitive ratio result of GADGET is stated in the following theorem.
Due to space limitation, we provide a proof sketch in this paper.
\begin{thm}\label{thm:greedy}
Algorithm~\ref{alg:fractional_scheduling} produces a schedule that is $\frac{\alpha}{\alpha+1}$- competitive, where $\alpha \in (0,1]$ is the approximation ratio of solving Problem~(\ref{problem:pt}) in each time-slot.
\end{thm}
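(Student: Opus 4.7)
\begin{hproof}
I would adapt the classical Fisher--Nemhauser--Wolsey analysis of greedy monotone submodular maximization over a matroid, extended to handle an approximate per-slot oracle. Let $\mathcal{E}^g=\{v_1^g,\ldots,v_T^g\}$ be the schedule returned by Algorithm~\ref{alg:fractional_scheduling}, where $v_t^g\in\mathcal{V}[t]$ is the slot-$t$ allocation, and write $\mathcal{E}_t^g=\{v_1^g,\ldots,v_t^g\}$. Let $\mathcal{E}^*=\{v_1^*,\ldots,v_T^*\}$ be an optimal offline schedule, which is indexable this way because the partition matroid $(\mathcal{V},\mathcal{J})$ has $\nu_t=1$, forcing at most one element per slot.

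First, I would record a per-slot marginal-gain inequality. Since $z_{i,t-1}$ is fixed at slot $t$, the objective of Problem~(\ref{problem:pt}) equals $F(\mathcal{E}_{t-1}^g\cup\{v\})$ for a candidate allocation $v\in\mathcal{V}[t]$, with the additive constant $F(\mathcal{E}_{t-1}^g)$ not affecting the maximizer. The $\alpha$-approximation of Algorithm~\ref{alg:round} thus yields
\[
F(\mathcal{E}_t^g)-F(\mathcal{E}_{t-1}^g)\ \geq\ \alpha\bigl[F(\mathcal{E}_{t-1}^g\cup\{v_t^*\})-F(\mathcal{E}_{t-1}^g)\bigr].
\]
Next, I would upper bound the optimum by telescoping $v_1^*,\ldots,v_T^*$ appended to $\mathcal{E}^g$: monotonicity gives $F(\mathcal{E}^*)\leq F(\mathcal{E}^g\cup\mathcal{E}^*)$, and the identity $F(\mathcal{E}^g\cup\mathcal{E}^*)=F(\mathcal{E}^g)+\sum_{t=1}^T[F(\mathcal{E}^g\cup\{v_1^*,\ldots,v_t^*\})-F(\mathcal{E}^g\cup\{v_1^*,\ldots,v_{t-1}^*\})]$ combined with submodularity along the inclusion $\mathcal{E}_{t-1}^g\subseteq\mathcal{E}^g\cup\{v_1^*,\ldots,v_{t-1}^*\}$ upper bounds each bracketed term by $F(\mathcal{E}_{t-1}^g\cup\{v_t^*\})-F(\mathcal{E}_{t-1}^g)$. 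Substituting the per-slot inequality,
\[
F(\mathcal{E}^*)\ \leq\ F(\mathcal{E}^g)+\tfrac{1}{\alpha}\sum_{t=1}^T\bigl[F(\mathcal{E}_t^g)-F(\mathcal{E}_{t-1}^g)\bigr]\ =\ \bigl(1+\tfrac{1}{\alpha}\bigr)F(\mathcal{E}^g),
\]
which rearranges to $F(\mathcal{E}^g)\geq\tfrac{\alpha}{\alpha+1}F(\mathcal{E}^*)$, as claimed.

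The main obstacle I anticipate is the marginal-gain transfer step. Algorithm~\ref{alg:round} is phrased as an $\alpha$-approximation for the slot-level objective of Problem~(\ref{problem:pt}), but the telescoping strictly needs $\alpha$-approximation on the \emph{incremental} quantity $F(\mathcal{E}_{t-1}^g\cup\{v\})-F(\mathcal{E}_{t-1}^g)$. Because $F(\mathcal{E}_{t-1}^g)$ is a decision-independent additive constant at slot $t$, the two optimization problems have identical maximizers, so I would re-phrase the Section~\ref{sec:VNE} guarantee directly against a relaxation of the marginal-gain form to make this step airtight. A secondary point is verifying that the offline benchmark itself lies in the partition matroid $(\mathcal{V},\mathcal{J})$; this follows because constraints~(\ref{ctr:maxworker})--(\ref{ctr:cycle}) are per-slot and the only cross-slot coupling, constraint~(\ref{ctr:iter}), is precisely what the active set $\mathcal{I}[t]$ in Algorithm~\ref{alg:fractional_scheduling} tracks via $z_{i,t-1}$. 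The online aspect is automatic, since the slot-$t$ greedy decision uses only past quantities and currently arrived jobs.
\end{hproof}
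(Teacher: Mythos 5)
Your proof is correct in outline but takes a different route from the paper's. The paper's proof sketch is a reduction to a known black-box result: it observes that the schedule space is a partition matroid with $\nu_t=1$ (hence a $1$-system), and then cites the result of Calinescu et al.\ that greedy with an $\alpha$-approximate per-round oracle on a $p$-system is $\frac{\alpha}{\alpha+p}$-competitive, so the whole burden of the proof is showing $p=1$ (all maximal independent sets restricted to any $Y$ have equal cardinality). You instead re-derive the $\frac{\alpha}{\alpha+1}$ bound from first principles by the Fisher--Nemhauser--Wolsey telescoping argument: per-slot marginal-gain inequality from the $\alpha$-oracle, then $F(\mathcal{E}^*)\le F(\mathcal{E}^g\cup\mathcal{E}^*)$ by monotonicity and a telescoping/submodularity bound on each increment. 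Your version is more self-contained and makes explicit exactly where monotonicity, submodularity, and the oracle guarantee enter; it also surfaces the marginal-versus-absolute approximation subtlety, which the paper resolves implicitly by defining $\pi_{i,\kappa}$ in Problem~(\ref{problem:mcf}) as the \emph{incremental} utility $\mu_i(\zeta_i z_{i,t-1}+\zeta_i\kappa)-\mu_i(\zeta_i z_{i,t-1})$, so the $1/(3\Gamma)$ guarantee is already stated against the marginal-gain objective as you require. One caveat you should tighten: your per-slot inequality needs $v_t^*$ to be a feasible candidate for the greedy's slot-$t$ subproblem, and the history-dependent budget constraint~(\ref{ctr:iter_t}) (via $z_{i,t-1}$, which differs between greedy and OPT) can in principle exclude it; this is exactly the history dependence that the $p$-system exchange machinery the paper invokes is built to absorb, whereas your direct telescoping assumes it away. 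Also note that your final equality $\sum_t[F(\mathcal{E}_t^g)-F(\mathcal{E}_{t-1}^g)]=F(\mathcal{E}^g)$ uses the normalization $F(\emptyset)=0$, i.e., $\mu_i(0)=0$, which is consistent with the paper's non-negative monotone setting but worth stating.
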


\begin{proof}[Proof Sketch]
The main idea of our proof is to leverage the $p$-system result for greedy algorithms with approximation, where $p$ denotes the ratio between the maximum and minimum cardinalities of maximal independent sets in a family of independent sets.
It has been shown (e.g., \cite{Calinescu11:matroid}) that applying an online greedy algorithm with an approximation ratio of $\alpha \!\in\! (0,1]$ in each round for a $p$-system yields a competitive ratio of $\frac{\alpha}{\alpha+p}$.
Hence, the result in Theorem~\ref{thm:greedy} is proved once we show that Problem~DDLJS is a 1-system ($p\!=\!1$).

Toward this end, recall that the space of all possible schedules $\mathcal{V}=\{y_{is}[t], \forall i, s, t\}$ is partitioned into a collection of disjoint ``allocations'' $\{\mathcal{V}[t]\}$, $t\in\mathcal{T}$, where $\mathcal{V}[t]=\{y_{is}[t], \forall i, s\}$ and the independence family is $\mathcal{J} = \{ \mathcal{H} \in \mathcal{V}: |\mathcal{H} \!\cap\! \mathcal{V}[t]| \!\leq\! 1, \forall t\}$.
For any $Y \!=\! \{\mathcal{U}[t] \!\subseteq\! \mathcal{V}[t], \forall t \} \!\subseteq\! \mathcal{V}$, let $\mathcal{B}(Y)$ be the set of maximal independent sets of $\mathcal{V}$ included in $Y$,
which implies that $|\mathcal{H}\cap\mathcal{U}[t]| \leq 1$.
Then, we can show that $\mathcal{B}(Y)=\{\{u_k[t]\in\mathcal{U}[t], \forall t\},\forall k\}$, where $k$ is the element (scheduling) index in the set $\mathcal{U}[t]$.
Thus for any $A\in\mathcal{B}(Y)$, the cardinality of $A$ is the same (i.e., $|A|=|Y|$), which immediately implies that $p=1$ and the proof is complete.
\end{proof}

\subsection{Solving the Per-Time-Slot Problem in (\ref{problem:pt})} \label{sec:VNE}

Under the temporally greedy approach in Algorithm~\ref{alg:fractional_scheduling}, it remains to solve an NP-Hard problem in (\ref{problem:pt}).
Due to the challenges in directly handling the ring topological constraint in \eqref{ctr:cycle} that is mixed-integer and highly unstructured,
we take an ``indirect'' approach by noting that Problem~\eqref{problem:pt} is a {\em generalized} virtual network embedding (VNE) problem (embedding a virtual computational graph onto a substrate physical graph while respecting all capacity constraints).
Notably, VNE with request graphs of cactus-type topologies has been solved in~\cite{Rost18:VNE}, which includes the ring topology as a special case.
The basic idea of the solution is based on randomized rounding the routing graphs over the underlying multi-commodity flow (MCF) problem (see, e.g., \cite{Rost18:VNE,Raghavan85:graphrandomizedrounding} for details).

However, our work differs from standard VNE~\cite{Rost18:VNE} in the following two key aspects: 
1) Unlike in standard VNE where the number of nodes in each request graph is given, the  number of nodes (i.e., workers) is part of the problem in \eqref{problem:pt}; 
2) Unlike standard VNE with a one-dimensional resource capacity constraint, Problem~\eqref{problem:pt} has  {\em multi-dimensional} resource capacity constraints.
Due to these differences, we refer to Problem~\eqref{problem:pt} in this paper as the {\em generalized} VNE (G-VNE).

To address these challenges, we again take a {\em ``divide-and-conquer''} approach:
i) We observe that, although the number of workers is unknown, its upper bound $q_i[t]$ can be obtained efficiently by solving Problem~(\ref{problem:pt}) with continuous relaxations of Constraints (\ref{ctr:maxworker}), (\ref{ctr:cap}), and (\ref{ctr:iter_t});
ii) Once the upper bound of the number of workers is known, we can reformulate the G-VNE problem with a ``one-hot'' worker number constraint.

To reformulate the G-VNE problem, we let $\rho_i[t] \!\in\!\{0,1\}$ be the binary variable to indicate whether job $i$ is embedded in time slot $t$.
Let $\mathcal{Q}_i[t] \triangleq \{1, \ldots, \lceil q_i[t] \rceil \}$ be the set of all possible numbers of workers at time t for job $i$.
Let binary variable $\chi_{i,\kappa}[t] \in \{0,1\}$ denote whether a ring of size $\kappa \in\mathcal{Q}_i[t]$ is chosen at time $t$.
We use $V_i^\kappa$ and $E_i^\kappa$ to represent the sets of nodes and edges of the chosen $\kappa$-ring, respectively.
Let $\varrho^u_{i, \kappa,a}[t]$ and $\theta^{u,v}_{i,\kappa,a,b}[t]$ denote whether node $a\in V^\kappa_i$ and edge $(a,b)\!\in\!E^\kappa_i$ are mapped to a physical node $u\!\in\!V_s$ and a physical link $(u, v)\in E_s$ at time $t$, respectively.

Next, we use $d_{i,\kappa}^{a, r}$ and $d_{i,\kappa}^{a,b}$ to denote the type-$r$ resource demand for node $a\in V_i^{\kappa}$ and resource demand for edge $(a,b)\in E_i^{\kappa}$ in the chosen $\kappa$-ring of job $i$, respectively.
Let $g_s^{u,r}$ and $e_s^{u,v}$ be the type-$r$ resource capacities of node $u\in V_s$ and edge $(u,v)\in E_s$ in the substrate network, respectively.
Also, let $h_i^{u,r}[t]$ and $o_i^{u, v}[t]$ denote the cumulative type-$r$ resource allocation on node $u$ and the cumulative and aggregated resource allocation on edge $(u,v)$ of job $i$ in the substrate network at time $t$, respectively.
Let $\pi_{i,\kappa}[t]$ be the incremental utility with the $\kappa$-ring chosen at time $t$, i.e., $\pi_{i,\kappa}[t]=\mu_i(\zeta_iz_{i,t-1}+\zeta_i \sum_{\kappa \in \mathcal{Q}_i[t]} \kappa \chi_{i,\kappa}[t]  -\mu_i(\zeta_iz_{i,t-1})$.
We let $\delta^+(u)$ and $\delta^-(u)$ denote the sets of outgoing and incoming edges of a node $u$, respectively.
Then, the G-VNE problem in time-slot $t$ can be reformulated as an integer linear program (ILP) (omitting time index ``$[t]$''  for notational simplicity): 
\begin{align}
&\label{problem:mcf}\text{Maximize }
     \sum_{i\in\mathcal{I}}\sum_{\kappa\in\mathcal{Q}_i}\pi_{i,\kappa}\chi_{i,\kappa}\\
    &\label{lp:1}\sum_{\kappa\in\mathcal{Q}_i}\chi_{i,\kappa} = \rho_i,\quad\forall i\in\mathcal{I},\\
    &\label{lp:2}\sum_{u\in V_s}\varrho^u_{i,\kappa,a}= \chi_{i,\kappa}, \quad \forall i\in\mathcal{I}, \kappa\in\mathcal{Q}_i, a\in V_i^\kappa, \\
    &\nonumber\label{lp:3}\sum_{\substack{(u,v) \in\delta^+(u)}} \theta^{u,v}_{i,\kappa, a, b} - \sum_{\substack{(v,u) \in\delta^-(u)}}\theta^{v,u}_{i,\kappa,a,b}= \varrho^u_{i,\kappa, a} -\varrho^u_{i,\kappa, b}, \\
    &\hspace{1in}\forall i\in\mathcal{I}, \kappa\in\mathcal{Q}_i, (a,b)\in E_i^\kappa, u\in V_s,\\
   &\label{lp:4} \sum_{\kappa\in\mathcal{Q}_i}\sum_{a\in V^\kappa_i}d_{i,\kappa}^{a,r} \varrho^u_{i,\kappa,a} = h_i^{u,r}, \quad \forall i\in\mathcal{I}, u\in V_s, r\in\mathcal{R},\\  
    &\label{lp:5}\sum_{\kappa\in\mathcal{Q}_i}\sum_{(a,b)\in E^\kappa_i}d_{i,\kappa}^{a,b} \theta^{u,v}_{i,\kappa,a,b} = o_i^{u,v}, \forall i\in\mathcal{I}, (u,v)\in E_s,\\ 
    &\label{lp:6}\sum_{i\in\mathcal{I}}h_i^{u,r}\leq g_s^{u,r}, \quad\forall u \in V_s,\forall r\in\mathcal{R},\\
    &\label{lp:7}\sum_{i\in\mathcal{I}}o_i^{u,v}\leq e_s^{u,v}, \quad\forall (u,v) \in E_s.
\end{align}
Here, Constraint~(\ref{lp:1}) ensures that at most one of the $\kappa$-ring graphs ($\kappa=1,\ldots,\lceil q_i \rceil $) from $\mathcal{Q}_i$ can be selected.
Constraint~(\ref{lp:2}) ensures that if job $i$ is embedded ($\rho_i=1$) and if a $\kappa$-ring graph is selected ($\chi_{i,\kappa}=1$), then each node $a\in V_i^\kappa$ must be mapped to some physical node in $V_s$, i.e., there exists a node $u\in V_s$ such that $\varrho_{i,\kappa,a}^u = 1$.
Constraint~(\ref{lp:3}) induces a non-splittable unit flow for each edge $(a,b)\in E_i^\kappa$ from the physical location that $a$ is mapped to the physical location that $b$ is mapped.
Constraints~(\ref{lp:4})--(\ref{lp:5}) compute the cumulative resource on physical nodes and edges, respectively.
Constraints~(\ref{lp:6})--(\ref{lp:7}) ensure no violation of the physical nodes and edges resource capacities, respectively.
%
Note that Constraints~(\ref{lp:1}), (\ref{lp:4}) and (\ref{lp:6}) are the {\em major differences} from the MCF formulation of the standard VNE problem (see, e.g., \cite{Rost18:VNE}), where each job has only one request graph (i.e., $|\mathcal{Q}_i|=1$) and each node has only one resource type.

Now, consider a linear program (LP)-based ring selection approach:
Let $\{\bar{\chi}_{i,\kappa}, \bar{\rho}_i\}$ be the LP-relaxation solution of Problem~(\ref{problem:mcf}) and $\bar{\pi}_{i,k}$ be the its utility value.
Select $\kappa_i = \max_{\kappa\in\mathcal{Q}_i : \bar{\chi}_{i,\kappa>0}}\{\bar{\pi}_{i,\kappa}\bar{\chi}_{i,\kappa}\}$ as the request ring-graph of job $i$ for $\bar{\rho}_i>0$.
That is, we set $\chi_{i,\kappa_i}\!=\!1$ and $\chi_{i,\kappa}\!=\!0$ if $\kappa\!\neq\!\kappa_i$, for each job with $\bar{\rho}_i>0$.
Then, we can show that this LP-based ring selection scheme has the following approximation ratio:

\begin{lem}[LP-based Ring Selection]\label{lem:selection}
Let $\Gamma\!\!=\!\!\max_i \lceil q_i\rceil$.
The LP-based ring selection scheme achieves at least  a $1/\Gamma$-fraction of the utility obtained by an offline optimal approach.
\end{lem}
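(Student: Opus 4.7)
The plan is to compare the utility returned by the LP-based ring selection scheme to the LP relaxation value of Problem~(\ref{problem:mcf}), which itself upper-bounds the offline integer optimum $V_{\mathrm{OPT}}$. Let $\{\bar{\chi}_{i,\kappa},\bar{\rho}_i\}$ denote an optimal LP solution of Problem~(\ref{problem:mcf}) and let $V_{\mathrm{LP}}=\sum_{i\in\mathcal{I}}\sum_{\kappa\in\mathcal{Q}_i}\bar{\pi}_{i,\kappa}\bar{\chi}_{i,\kappa}$ be its objective value. By Constraint~(\ref{lp:1}), $\sum_{\kappa\in\mathcal{Q}_i}\bar{\chi}_{i,\kappa}=\bar{\rho}_i\in[0,1]$, so in particular each $\bar{\chi}_{i,\kappa}\in[0,1]$, which is what allows us to bound the contribution of any single term to its corresponding sum.

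The first key step is a classical max-versus-average argument applied per job. Because $\kappa_i$ is selected to maximize $\bar{\pi}_{i,\kappa}\bar{\chi}_{i,\kappa}$ over $\kappa\in\mathcal{Q}_i$,
\[
\bar{\pi}_{i,\kappa_i}\bar{\chi}_{i,\kappa_i}\;\geq\;\frac{1}{|\mathcal{Q}_i|}\sum_{\kappa\in\mathcal{Q}_i}\bar{\pi}_{i,\kappa}\bar{\chi}_{i,\kappa}\;\geq\;\frac{1}{\Gamma}\sum_{\kappa\in\mathcal{Q}_i}\bar{\pi}_{i,\kappa}\bar{\chi}_{i,\kappa},
\]
where the second inequality uses $|\mathcal{Q}_i|=\lceil q_i\rceil\leq\Gamma$. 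Next, since $\pi_{i,\kappa}$ is a data quantity (independent of the LP solution) and the rounded decision satisfies $\chi_{i,\kappa_i}=1\geq\bar{\chi}_{i,\kappa_i}$, we obtain $\pi_{i,\kappa_i}\chi_{i,\kappa_i}=\pi_{i,\kappa_i}\geq\bar{\pi}_{i,\kappa_i}\bar{\chi}_{i,\kappa_i}$. Summing over $i\in\mathcal{I}$ gives
\[
\sum_{i\in\mathcal{I}}\pi_{i,\kappa_i}\chi_{i,\kappa_i}\;\geq\;\frac{1}{\Gamma}\sum_{i\in\mathcal{I}}\sum_{\kappa\in\mathcal{Q}_i}\bar{\pi}_{i,\kappa}\bar{\chi}_{i,\kappa}\;=\;\frac{V_{\mathrm{LP}}}{\Gamma}\;\geq\;\frac{V_{\mathrm{OPT}}}{\Gamma},
\]
which is exactly the claimed $1/\Gamma$-fraction bound.

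The main subtlety, I expect, lies less in the counting argument itself than in the scope of the lemma. Rounding $\chi_{i,\kappa_i}$ to one does not by itself yield a feasible embedding, because the companion node- and edge-mapping variables $\varrho^{u}_{i,\kappa,a}$ and $\theta^{u,v}_{i,\kappa,a,b}$ from the LP are still fractional and, if naively scaled up to match the integral $\chi_{i,\kappa_i}=1$, could violate the capacity constraints~(\ref{lp:6})--(\ref{lp:7}). One has to be careful to state that this lemma isolates \emph{only} the utility loss due to choosing one discrete ring size per job; the feasibility of the full embedding is delegated to the subsequent MCF-based randomized-rounding step for cactus-topology VNE~\cite{Rost18:VNE,Raghavan85:graphrandomizedrounding}, whose constant-factor loss composes with the $1/\Gamma$ obtained here to produce the overall $1/(3\Gamma)$ guarantee advertised earlier in the paper.
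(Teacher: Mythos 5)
Your proof is correct and follows essentially the same route as the paper's: both bound the offline optimum by the LP relaxation value and then use the fact that the per-job maximum of $\bar{\pi}_{i,\kappa}\bar{\chi}_{i,\kappa}$ over $\kappa$ is at least a $1/|\mathcal{Q}_i|\geq 1/\Gamma$ fraction of the per-job sum. The only cosmetic difference is that you sum the per-job inequalities directly while the paper routes the same argument through a ratio-of-sums versus max-of-ratios (mediant) step; your closing remark correctly identifies that feasibility of the full embedding is handled by the subsequent randomized-rounding stage, exactly as the paper intends.
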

\begin{proof}
Let $\chi^*_{i,\kappa^*}$, $\pi^*_{i,\kappa^*}$ be the optimal solution and objective value with $\kappa^*$ being the optimal ring size, respectively.
Let $\hat{\chi}_{i,\hat{\kappa}}$, $\hat{\pi}_{i,\hat{\kappa}}$ be the solution and objective value of our LP-based scheme, respectively, with $\hat{\kappa}$ being the ring size of the LP-based scheme.
Let $\Pi_i\triangleq\min\{N_i,\frac{C_s^r}{l_i^r},\frac{F_i^r}{l_i^r}, \forall r,s \}, \forall i$.
Then, 
\begin{align*}
&\frac{\sum_i\pi^*_{i,\kappa^*}\chi^*_{i,\kappa^*}}{\sum_i\hat{\pi}_{i,\hat{\kappa}}\hat{\chi}_{i,\hat{\kappa}}}\leq\frac{\sum_i\sum_\kappa\bar{\pi}_{i,\kappa}\bar{\chi}_{i,\kappa}}{\sum_i\hat{\pi}_{i,\hat{\kappa}}\hat{\chi}_{i,\hat{\kappa}}}\overset{(a)}{\leq}\max_i\frac{\sum_\kappa\bar{\pi}_{i,\kappa}\bar{\chi}_{i,\kappa}}{\hat{\pi}_{i,\hat{\kappa}}\hat{\chi}_{i,\hat{\kappa}}}\overset{(b)}{\leq}\Gamma.
\end{align*}
To see why $(a)$ holds, let $\bar{\iota}_i\triangleq\sum_\kappa\bar{\pi}_{i,\kappa}\bar{\chi}_{i,\kappa}$ and $\hat{\iota}_i\triangleq \hat{\pi}_{i,\hat{\kappa}}\hat{\chi}_{i,\hat{\kappa}}$.
Let $i^* = \arg\max_{i} \big\{ \frac{\sum_\kappa\bar{\pi}_{i,\kappa}\bar{\chi}_{i,\kappa}}{\hat{\pi}_{i,\hat{\kappa}}\hat{\chi}_{i,\hat{\kappa}}} \big\}$.
Thus, we have
$\frac{\bar{\iota}_{i^*}}{\hat{\iota}_{i^*}} - \frac{\sum_i\bar{\iota}_i}{\sum_i\hat{\iota}_i}=\frac{\bar{\iota}_{i^*}\sum_i\hat{\iota}_i-\hat{\iota}_{i^*}\sum_i\bar{\iota}_i}{\hat{\iota}_{i^*}\sum_i\hat{\iota}_i} \!=\! \frac{\sum_i(\bar{\iota}_{i^*}\hat{\iota}_i-\bar{\iota}_i\hat{\iota}_{i^*})}{\hat{\iota}_{i^*}\sum_i\hat{\iota}_i} \!\geq\! 0$. 
Also, $(b)$ follows from i) the LP relaxation: $\frac{\bar{\pi}_{i,\kappa}\bar{\chi}_{i,\kappa}}{\hat{\pi}_{i,\hat{\kappa}}\hat{\chi}_{i,\hat{\kappa}}} \!\leq\! 1, \forall i, \forall \kappa \!\in\! \mathcal{Q}_i$); and ii) $|\mathcal{Q}_i| \!=\! \lceil q_i\rceil \!\leq\! \Pi_i$.
This completes the proof.
\end{proof}

Next, upon determining the ring size $\kappa$, we perform virtual network embedding with multi-dimensional resource constraints.
First, similar to \cite[Formulation~2]{Rost18:VNE}, we resolve the uncertainty that may occur in embedding cyclic graphs by creating an augmented LP \cite{Rost18:VNE} that binds multiple copies of a ``reduced version'' of Problem~(\ref{problem:mcf}), where the set of $\hat{\kappa}$-rings has been chosen following the LP-based ring selection.
Then, we solve the augmented LP to obtain a relaxation solution, which can be used to recover a set of mapping-selection tuples $\mathcal{M}_i=\{(\varphi_i^k, \omega_i^k), k=1,\ldots,|\mathcal{M}_i| : \varphi_i^k>0, \sum_{k} \varphi_i^k \leq1 \}$ for each job $i$, where the mapping $\omega_i^k$ (a candidate of embedding) is chosen with probability $\varphi_i^k$, and rejected with probability $1-\sum_k\varphi_i^k$.
Note that to address the {\em multi-dimensional resource} challenge in computing $\mathcal{M}_i$, our key idea is to conduct mapping search to determine $\omega_i^k(r)$ for each $r \in \mathcal{R}$ by leveraging techniques in \cite[Sec.~III-C]{Rost18:VNE}.
Then, we choose $\omega_{i}^{k}\!=\!\cap_{r\in\mathcal{R}}\omega_{i}^{k}(r)$.
After the set $\{ \mathcal{M}_i, \forall i \}$ is calculated, we perform randomized rounding with probabilities based on $\varphi_i^k$-values  to obtain the embedding.
Putting all these together, we summarize our LP-based ring-selection and multi-dimensional resource embedding (LP-RS-MDE) method in Algorithm~\ref{alg:round}.
\begin{algorithm}
\SetAlgoLined
\label{line:intialization}\textbf{Initialization:} Choose $u_b$ as the maximum number of roundings, and set $iter\leftarrow 1$\;
Set the values of constants $\alpha$, $\beta^r$, $\gamma$  (see Theorem~\ref{thm:ratio})\;
Solve the LP relaxation of Problem~(\ref{problem:mcf}) and set ring sizes $\kappa_i = \max_{\kappa\in\mathcal{Q}_i : \bar{\chi}_{i,\kappa>0}}\{\bar{\pi}_{i,\kappa}\bar{\chi}_{i,\kappa}\}$ for all job $i$\label{line:LP1}\;
Solve an augmented LP  with ring sizes fixed in Step~\ref{line:LP1}\label{line:LP2}\;
\For{$i\in\mathcal{I}[t]$}{\label{line:dmp_start}
	Compute $\mathcal{M}_i\!\!=\!\!\{(\varphi_i^k,\omega_i^k)\}$ based on the solution of the augmented LP, where $\varphi_i^k, \omega_i^k$ are determined by \cite[Sec.~III-C]{Rost18:VNE} and the bottleneck resource\label{line:dmp_end}\;
}
\While{solution is not $(\alpha, \beta^r, \gamma)$-approx. \& $iter\!<\!u_b$}{\label{line:round_start}
	{\bf foreach} $i\in\mathcal{I}[t]$ choose $\omega_i^k$ with probability $\varphi_i^k$\;
	$iter\leftarrow iter + 1$\label{line:round_end}\;
	
}
 \caption{LP-based Ring-Selection and Multi-Dimensional Resource Embedding (LP-RS-MDE).}
 \label{alg:round}
\end{algorithm}

To analyze LP-RS-MDE's performance, we let $d^r_{max}(i, u)$ and $d_{max}(i, u, v)$ denote the maximal type-$r$ resource demands that job $i$ imposes on node $u$ and edge $(u,v)$ in the substrate network, respectively.
Let $C^r_{max}(i, u)$ and $C_{max}(i, u, v)$ denote the maximal type-$r$ resource allocation that a valid mapping of job $i$ may impose on the substrate network's nodes and edges, respectively.
We define two constants as follows:
\begin{align*}
&\Delta^r(V_s) \triangleq \max_{u}\sum_{i\in\mathcal{I}[t]:d^r_{max(i,u)>0}}\!\!\!\!(C^r_{max}(i,u)/d^r_{max}(i,u))^2, \forall r,\\
&\Delta(E_s) \triangleq \max_{(u,v)}\sum_{i\in\mathcal{I}[t]:d_{max(i,u,v)>0}}\!\!\!\!\!\!(C_{max}(i,u,v)/d_{max}(i,u,v))^2.
\end{align*}

Also, we let $A^{u,r}[t]$ and $A^{u,v}[t]$ represent the overall type-$r$ resource allocation on node $u\in V_s$ and on edge $(u,v)\in E_s$ after randomized rounding at time $t$, respectively.
Then we have following analytical results:
\begin{thm}[Performance of LP-RS-MDE]
\label{thm:ratio} Assume that the substrate network has at least three servers ($|V_s|\geq 3$).
Then, LP-RS-MDE achieves at least 
$(1/3)$-fraction of the optimal value of Problem~\eqref{problem:mcf}, with probabilities of resource constraint violations satisfying $\mathbb{P}\{A^{u,r}[t] \geq \beta^rg_s^{u,r}\} \leq |V_s|^{-4}, \forall r, u, t$, and $\mathbb{P}\{A^{u,v}[t]\geq \gamma e_s^{u,v}\}\leq |E_s|^{-4}, \forall (u,v), t$, where $\beta^r =1+\epsilon\sqrt {2\Delta^r(V_s)\log(|V_s|)}$, $\forall r$, and $\gamma = 1 + \epsilon\sqrt{2\Delta(E_s)\log(|E_s|)}$.
\end{thm}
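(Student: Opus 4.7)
The plan is to decompose Theorem~\ref{thm:ratio} into three independent claims---Chernoff-type tail bounds for node capacities, analogous bounds for edge capacities, and a lower bound on the rounded objective---and then stitch them together by amplifying across the $u_b$ rounds of Algorithm~\ref{alg:round}.

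First I would handle the node-capacity violation. For a fixed node $u$ and resource type $r$, write $A^{u,r}[t] = \sum_{i\in\mathcal{I}[t]} \sum_k d_{i,k}^{u,r}\, X_{i,k}$, where $X_{i,k}$ is the indicator that mapping $\omega_i^k$ is chosen during rounding (independent across $i$) and $d_{i,k}^{u,r}$ is the type-$r$ demand that this mapping places on $u$. Feasibility of the augmented LP in Step~\ref{line:LP2} gives $\mathbb{E}[A^{u,r}[t]] \leq g_s^{u,r}$; since $\Delta^r(V_s)$ is exactly the squared-ratio sum appearing in the weighted Raghavan--Thompson Chernoff bound, applying that inequality with deviation $\epsilon\sqrt{2\Delta^r(V_s)\log|V_s|}$ yields the claimed $|V_s|^{-4}$ tail. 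The edge bound is structurally identical with $\Delta(E_s)$ replacing $\Delta^r(V_s)$. Second, because each mapping $\omega_i^k$ is chosen independently with probability $\varphi_i^k$, the expected rounded objective equals the optimum $L^\star$ of the augmented LP (itself an upper bound on the optimum of Problem~\eqref{problem:mcf}), so a reverse-Markov argument applied to $L^\star$ minus the rounded objective shows that a single round attains at least $\tfrac{1}{3}L^\star$ with constant probability. A union bound over the at most $|V_s|\cdot|\mathcal{R}|$ node and $|E_s|$ edge violation events---together with $|V_s|\geq 3$ so that the $|V_s|^{-4}$ and $|E_s|^{-4}$ contributions stay dominated---makes each round succeed with some constant probability $p_0>0$, and running $u_b$ independent rounds amplifies this to high probability.

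The hard part will be accommodating the multi-dimensional resource constraints that distinguish G-VNE from the standard VNE of~\cite{Rost18:VNE}: the intersection $\omega_i^k = \cap_{r\in\mathcal{R}} \omega_i^k(r)$ constructed in Step~\ref{line:dmp_end} must preserve the marginals that feed into each per-$(u,r)$ Chernoff bound, so one has to verify that the $\varphi_i^k$-weighted distribution of intersected mappings still has expected load on $(u,r)$ at most $g_s^{u,r}$ for every $r$ simultaneously, rather than only for the bottleneck resource used in the construction. Once that marginal-preservation step is in place, cyclic feasibility of the rounded embedding is automatic from the augmented LP's construction (only the capacity constraints, not the topological ones, can be violated), and combining the resulting $\tfrac{1}{3}$-fraction guarantee on Problem~\eqref{problem:mcf} with Lemma~\ref{lem:selection} recovers the overall $1/(3\Gamma)$-competitive claim promised in Section~\ref{sec:alg}.
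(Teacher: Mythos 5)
Your proposal follows essentially the same route as the paper, which itself omits the details and simply notes that the randomized-rounding analysis of \cite{Rost18:VNE} can be replayed per resource type $r$ once the mappings $\omega_i^k=\cap_{r\in\mathcal{R}}\omega_i^k(r)$ from Step~6 are in hand: your per-$(u,r)$ Raghavan--Thompson tail bounds with constants $\Delta^r(V_s)$ and $\Delta(E_s)$, the reverse-Markov argument for the $\tfrac{1}{3}$-fraction of the augmented-LP optimum, and the amplification over the $u_b$ rounding rounds are exactly the ingredients the paper is invoking. You also correctly isolate the one genuinely new step relative to standard VNE---that the intersected mappings must preserve the per-resource marginals feeding each Chernoff bound---which is precisely the point the paper's (omitted) proof rests on.
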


To prove Theorem~\ref{thm:ratio}, note that unlike the single-dimensional resource capacity in standard VNE~\cite{Rost18:VNE}, the analysis of Algorithm~\ref{alg:round} needs to consider multi-dimensional resource capacity violation (see Eqs.~(\ref{lp:4}) and (\ref{lp:6})).
Toward this end, we observe that the mapping computation in Step~\ref{line:dmp_end} of Algorithm~\ref{alg:round} implies that a similar approach as in \cite{Rost18:VNE} can be used to analyze the final mapping for each type-$r$ resource in each job, which in turn leads to the results stated in Theorem~\ref{thm:ratio}.
We omit the proof details here due to space limitation.
Next, we analyze the running time complexity of LP-RS-MDE.

\begin{thm}[Time Complexity of LP-RS-MDE]\label{thm:time}
LP-RS-MDE has polynomial time complexity $O((|\mathcal{I}[t]|\cdot|G_s|)^3 + u_b\cdot\mathcal{I}[t])$.
\end{thm}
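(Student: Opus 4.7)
The plan is to bound the running time of Algorithm~\ref{alg:round} by examining each of its stages separately and showing that no stage exceeds the claimed bound. I would organize the analysis into three blocks: the LP preprocessing (Steps~\ref{line:LP1}--\ref{line:LP2}), the mapping-tuple decomposition (Steps~\ref{line:dmp_start}--\ref{line:dmp_end}), and the randomized rounding loop (Steps~\ref{line:round_start}--\ref{line:round_end}).

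For the LP preprocessing block, the first task is to count variables and constraints of the LP relaxation of Problem~(\ref{problem:mcf}). The decision variables $\chi_{i,\kappa}$, $\rho_i$, $\varrho^u_{i,\kappa,a}$, $\theta^{u,v}_{i,\kappa,a,b}$, $h_i^{u,r}$, and $o_i^{u,v}$ together scale polynomially in $|\mathcal{I}[t]|$, $|\mathcal{Q}_i|$, and $|G_s| = |V_s| + |E_s|$, where $|\mathcal{Q}_i| \leq \Gamma$ is a problem-dependent constant (absorbed into $|G_s|$ in the final bound). Constraints (\ref{lp:1})--(\ref{lp:7}) are likewise polynomially many. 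Invoking a standard polynomial-time LP solver (e.g., interior-point method with $O(n^3)$ arithmetic complexity for an LP of size $n$) yields complexity $O((|\mathcal{I}[t]| \cdot |G_s|)^3)$ for Step~\ref{line:LP1}. The augmented LP in Step~\ref{line:LP2} has the same asymptotic size since it only binds a constant number of copies of a ring-fixed reduced version, so it also solves within the same bound.

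For the decomposition block, I would argue that for each job $i$, the extraction of $\mathcal{M}_i = \{(\varphi_i^k, \omega_i^k)\}$ proceeds by repeatedly identifying a candidate mapping supported by the fractional LP solution, which is the procedure of \cite[Sec.~III-C]{Rost18:VNE} extended coordinate-wise across the $|\mathcal{R}|$ resource types (with $\omega_i^k = \cap_{r \in \mathcal{R}} \omega_i^k(r)$). Since this procedure is polynomial per job in the LP size and $|\mathcal{R}|$ is a constant, the overall cost of this block is dominated by the LP-solving cost and is absorbed into $O((|\mathcal{I}[t]| \cdot |G_s|)^3)$.

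For the rounding loop, each iteration simply samples one mapping per job according to the $\varphi_i^k$-distribution, costing $O(|\mathcal{I}[t]|)$ per round. With at most $u_b$ iterations, this contributes $O(u_b \cdot |\mathcal{I}[t]|)$. Summing the three blocks gives the claimed bound $O((|\mathcal{I}[t]| \cdot |G_s|)^3 + u_b \cdot |\mathcal{I}[t]|)$. The main obstacle in writing out a rigorous proof is carefully accounting for the size of the augmented LP (since the construction in \cite{Rost18:VNE} introduces auxiliary copies whose count must be shown to be polynomial in $|\mathcal{Q}_i|$) and verifying that absorbing $|\mathcal{Q}_i|$ and $|\mathcal{R}|$ into the $|G_s|$ factor is justified; once those counting details are pinned down, the rest follows by routine invocation of polynomial LP complexity.
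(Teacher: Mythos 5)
Your proposal is correct and follows essentially the same route as the paper's proof: the same three-block decomposition (LP solving at cubic cost via a polynomial LP solver, mapping decomposition absorbed as lower-order, and $O(u_b\cdot|\mathcal{I}[t]|)$ for rounding), with the paper merely giving a slightly more explicit count of $O(|G_i|\cdot|G_s|\cdot|V_s|\cdot|\mathcal{I}[t]|)$ for the mapping-search block before likewise absorbing it into the cubic term.
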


\begin{proof}
The main components of the running time include solving the LP relaxations to obtain fractional solutions (Lines~\ref{line:LP1}-\ref{line:LP2} in Algorithm~\ref{alg:round}), computing valid mappings using decomposition approach (Lines~\ref{line:dmp_start}-\ref{line:dmp_end} in Algorithm~\ref{alg:round}), and then performing randomized rounding on these computed mappings (Lines~\ref{line:round_start}-\ref{line:round_end} in Algorithm~\ref{alg:round}). 
The time complexity of solving the LP relaxations is $O((|\mathcal{I}[t]| \cdot |G_s|)^3)$~\cite{VaidyaLP:90}.
The mapping search step terminates when $\varrho_i \leq 0$, and in each termination, at least one variable becomes $0$.
The number of variables for each job $i$ is bound by $O(|G_i| \cdot |G_s|)$ for each copy of Problem~(\ref{problem:mcf}).
Specifically, the mapping search visits all nodes and edges in the request graph.
Also, in each visit, in order to find the set of mapping-selection tuples $\{ \mathcal{M}_i$, $\forall i\}$, the substrate network will be traversed.
Thus, the total number of variables is upper bounded by $O(|G_i| \cdot |G_s|)$.
Further, there are at most $O(|V_s|)$ copies of Problem~\eqref{problem:mcf}, which implies $O(|G_i|\cdot|G_s|\cdot |V_s|\cdot|\mathcal{I}[t]|)$ complexity.
Finally, the rounding time of each job $i$ is bounded by $u_b$, and thus the running time is bounded by $O(u_b\cdot\mathcal{I}[t])$. 
Hence, the overall time complexity of the LP-RS-MDE method is $O((|\mathcal{I}[t]|\cdot|G_s|)^3\!+\!u_b\cdot\mathcal{I}[t])$, and the proof is complete.
\end{proof}

By combining results in Theorems~\ref{thm:greedy} and \ref{thm:ratio}, we have: 

\begin{thm}[Competitive Ratio of GADGET]\label{thm:overallratio}
Algorithm~\ref{alg:fractional_scheduling} produces a schedule that yields a utility value at least  $(\frac{1}{3\Gamma+1})$-fraction of the maximum utility value of Problem~DDLJS, with probabilities of resource violations satisfying $\mathbb{P}\{A^{u, r} [t] \geq \beta^r g_s^{u,r}, \exists t \in \mathcal{T} \} \leq 1-(1-|V_s|^{-4})^T, \forall u, r$, and $\mathbb{P}\{A^{u, v} [t] \geq \gamma e_s^{u,v}, \exists t \in \mathcal{T} \} \leq 1-(1-|E_s|^{-4})^T, \forall (u,v)$, where $\Gamma$ and $\beta^r$, $\gamma$ are as defined in Lemma~\ref{lem:selection} and Theorem~\ref{thm:ratio}.
\end{thm}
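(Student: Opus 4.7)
The plan is to combine the three per-stage guarantees already established earlier in the excerpt — Theorem~\ref{thm:greedy} (the online temporally greedy reduction), Lemma~\ref{lem:selection} (the LP-based ring-size selection bound), and Theorem~\ref{thm:ratio} (the multi-dimensional embedding bound) — into a single end-to-end guarantee for GADGET. The utility bound will follow from multiplicative composition of the per-slot losses followed by a direct substitution into the greedy analysis, while the resource-violation bounds will follow from independence of the randomized rounding across time slots.

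First I would assemble the per-slot approximation ratio. In each time slot $t$, Algorithm~\ref{alg:round} first chooses a ring size via the LP-based rule, losing at most a factor of $1/\Gamma$ relative to the LP optimum by Lemma~\ref{lem:selection}, and then performs randomized rounding on the augmented LP, recovering at least a $1/3$-fraction by Theorem~\ref{thm:ratio}. Composing these two multiplicative factors gives a per-slot approximation ratio $\alpha = \tfrac{1}{3\Gamma}$ for Problem~\eqref{problem:pt}. Plugging this $\alpha$ into Theorem~\ref{thm:greedy} yields the competitive ratio $\frac{\alpha}{\alpha+1} = \frac{1/(3\Gamma)}{1/(3\Gamma)+1} = \frac{1}{3\Gamma+1}$, which is precisely the claimed utility bound.

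Next I would lift the per-slot violation bounds of Theorem~\ref{thm:ratio} to a bound over the whole horizon $\mathcal{T}$. Since Algorithm~\ref{alg:round} draws a fresh randomized rounding in each time slot, the events $\{A^{u,r}[t] \geq \beta^r g_s^{u,r}\}$ are independent across $t$, each of probability at most $|V_s|^{-4}$. Hence the probability that no node-capacity violation occurs in any slot is at least $(1-|V_s|^{-4})^T$, and taking complements gives $\mathbb{P}\{A^{u,r}[t] \geq \beta^r g_s^{u,r}, \exists t\in\mathcal{T}\} \leq 1-(1-|V_s|^{-4})^T$. The edge-capacity bound is obtained identically with $|V_s|$ replaced by $|E_s|$.

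The main obstacle I anticipate is cleanly justifying the multiplicative composition $\alpha = \tfrac{1}{3\Gamma}$, because the $1/\Gamma$ factor in Lemma~\ref{lem:selection} is measured against the LP relaxation of Problem~\eqref{problem:mcf}, whereas the $1/3$ factor in Theorem~\ref{thm:ratio} is measured against the ring-size-fixed augmented LP. I would need to verify that, after the ring size is fixed by the LP-based selection, the resulting augmented LP value still upper bounds the corresponding integer optimum so that the two relative guarantees chain without introducing additional slack. Once this alignment is in place, the remaining arithmetic is mechanical and both parts of the theorem assemble directly.
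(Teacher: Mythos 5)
Your proposal is correct and matches the paper's (largely implicit) argument: the paper likewise obtains the per-slot ratio $\alpha = \frac{1}{3\Gamma}$ by composing Lemma~\ref{lem:selection} with Theorem~\ref{thm:ratio}, substitutes into the $\frac{\alpha}{\alpha+1}$ bound of Theorem~\ref{thm:greedy} to get $\frac{1}{3\Gamma+1}$, and lifts the per-slot violation probabilities to the horizon via independence of the per-slot roundings, which is exactly how the $1-(1-|V_s|^{-4})^T$ form arises. Your flagged concern about chaining the two relative guarantees is resolved as you suspect, since the LP relaxation of Problem~(\ref{problem:mcf}) upper bounds the integer optimum, so no additional slack is introduced.
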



\begin{figure*}[t!]
 \begin{minipage}[t]{0.24\linewidth}
        \includegraphics[width=1.1\textwidth]{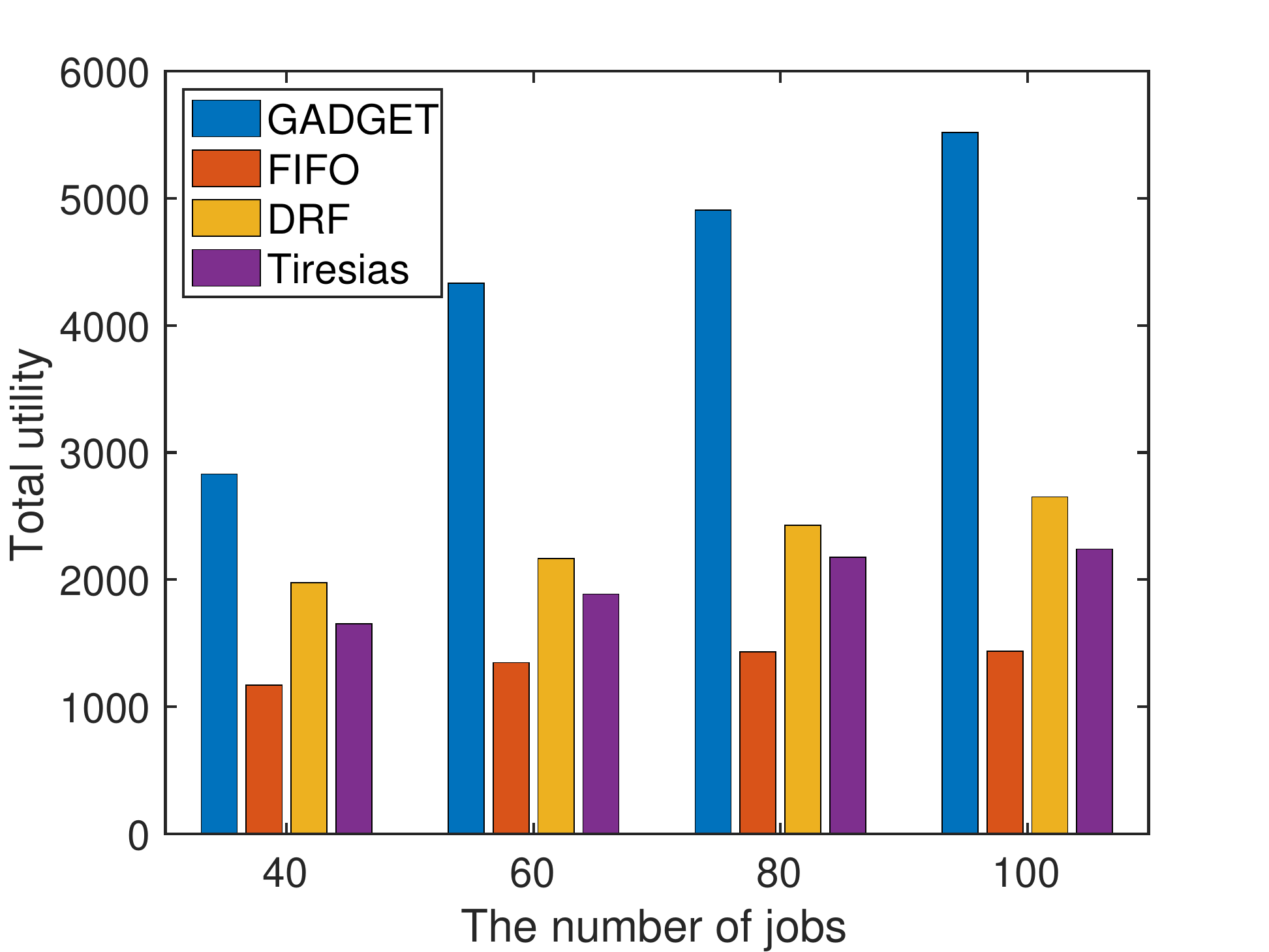}
        \caption{Total utility comparisons.} \label{fig:utility}
    \end{minipage}%
    \hspace{0.005\linewidth}
    \begin{minipage}[t]{0.24\linewidth}
        \centering
        \includegraphics[width=.96\textwidth]{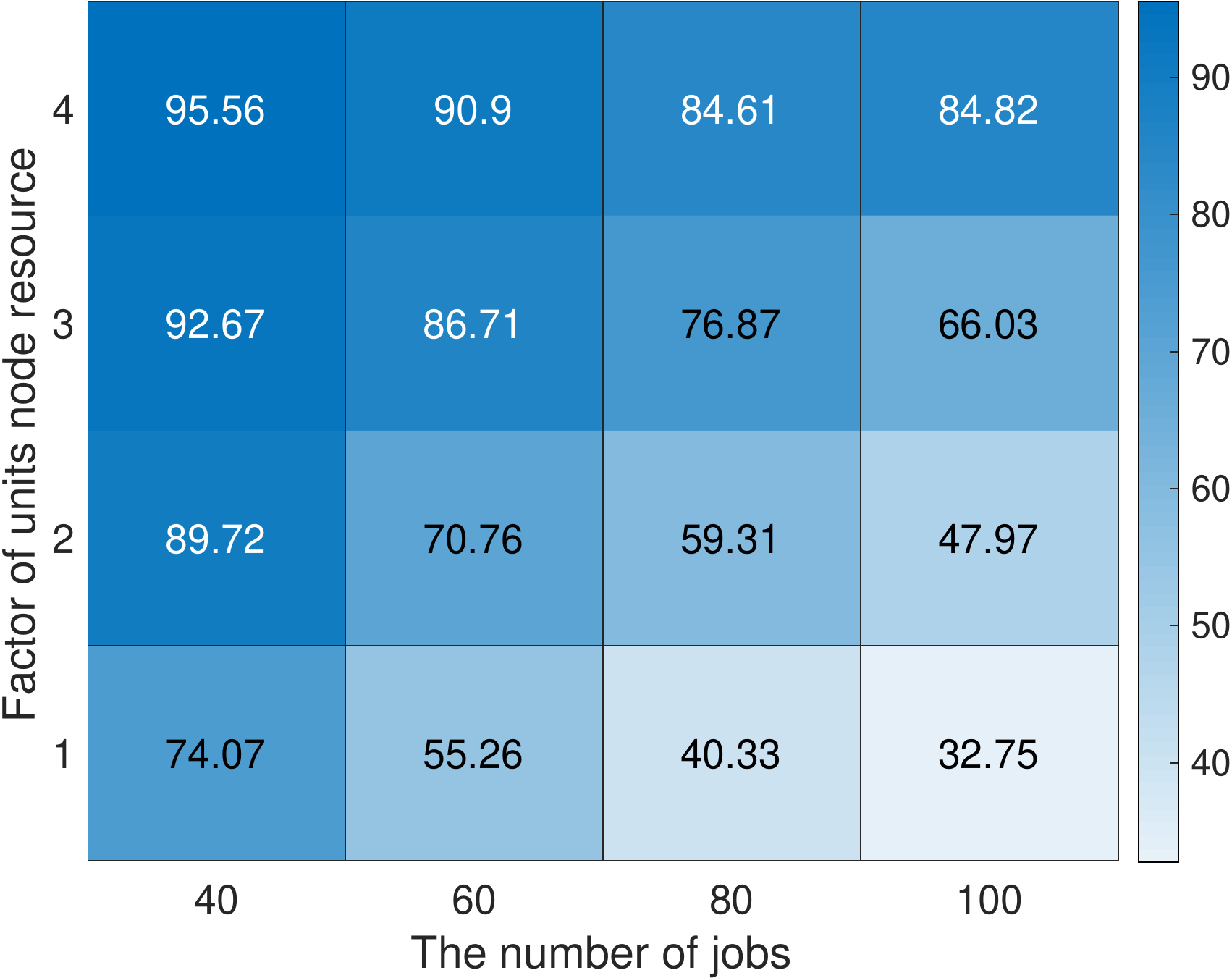}
        \caption{Embedded ratio overview w.r.t node resource.} \label{fig:NRF}
    \end{minipage}%
    \hspace{0.005\textwidth}
    \begin{minipage}[t]{0.24\linewidth}
        \includegraphics[width=1.1\textwidth]{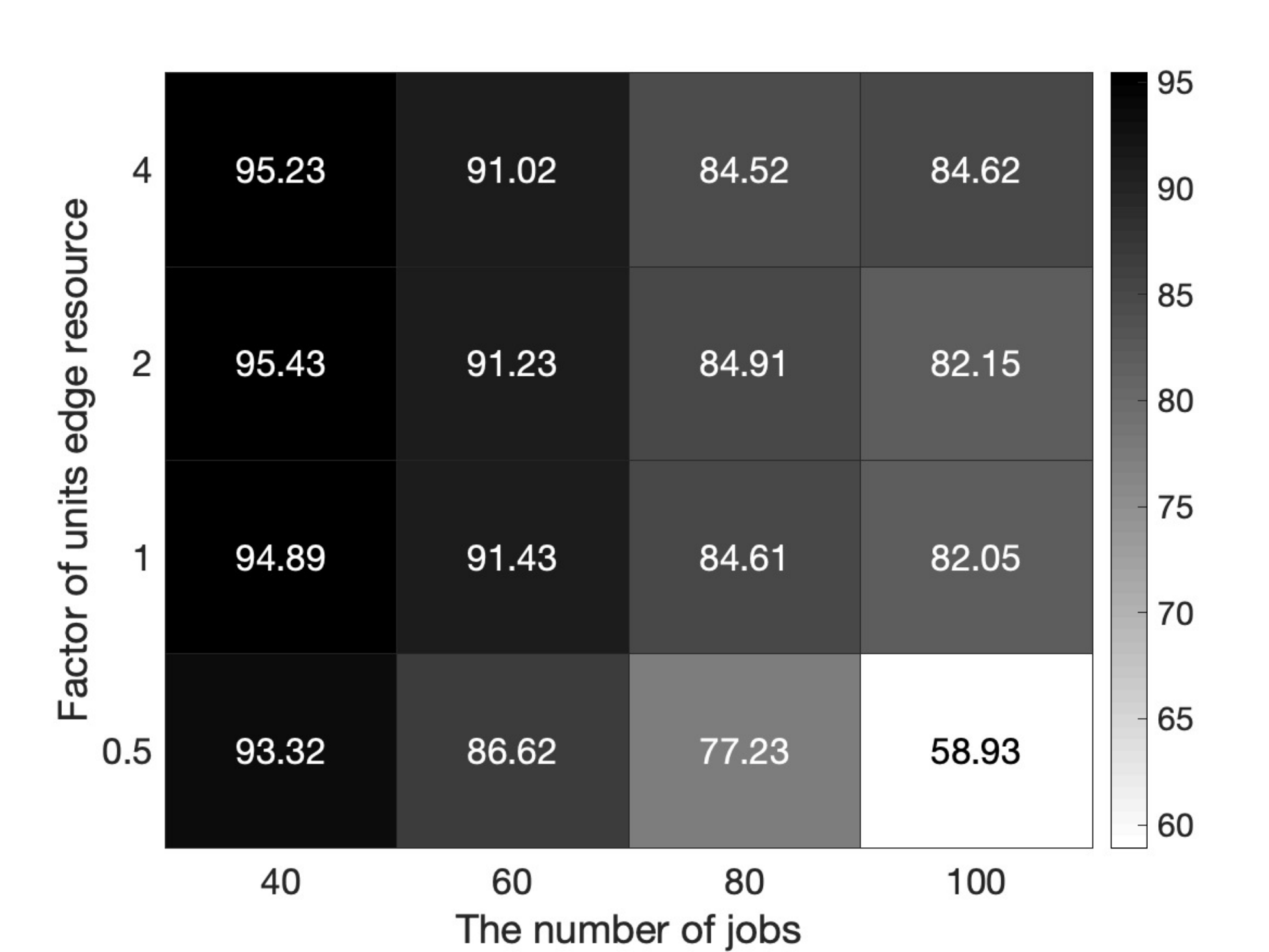}
        \caption{Embedded ratio overview w.r.t edge resource.}\label{fig:ERF}
    \end{minipage}%
    \hspace{0.005\linewidth} 
    \begin{minipage}[t]{0.24\linewidth}
        \includegraphics[width=1.1\textwidth]{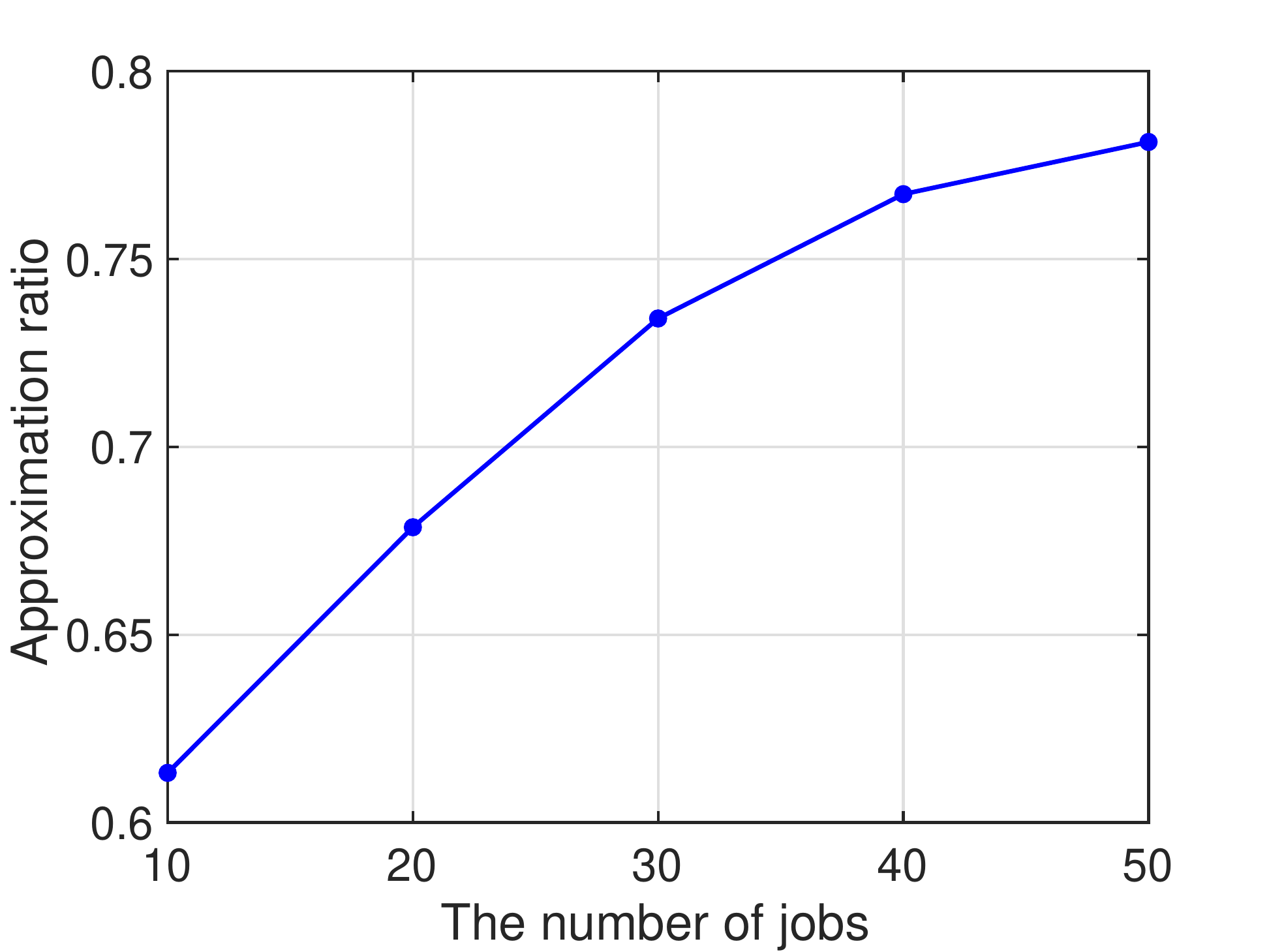}
        \caption{
		Approximation ratio.}\label{fig:ratio}
    \end{minipage}%
\vspace{-.2in}
\end{figure*}

\section{Empirical Studies}
\label{sec:numerical}

In this section, we conduct simulations to evaluate our GADGET algorithm.
We use ``excessive training avoidance'' as an application example (cf. Section~\ref{sec:general_framework}, Page~5), which aims to maximize the overall utility in a GPU computing cluster.
Here, Eq.~(\ref{ctr:iter}) is specialized to $\sum_{t}\sum_{s}y_{is}[t]\leq F_i,\forall i$, where $F_i$ is the maximum number of iterations specified by users upon their job submissions.
Eq.~(\ref{ctr:cap}) is specialized to $\sum_i y_{is}[t] \!\leq\! C_s, \forall s, t$, where $C_s$ is the GPU capacity of server $s$.

\smallskip
{\bf 1) Experiment Settings:} We use a ``fat-tree'' GPU computing cluster running for $T\!=\!200$, with $S\!=\!50$ servers.
The servers are randomly divided into ``racks'' and the number of racks is randomly chosen between $[2, 5]$.
Following similar settings as in~\cite{Mahajan20:Themis}, we configure each server with the number of GPUs randomly chosen from the discrete set $\{1, 2, 4, 8\}$.
We adopt the job arrival pattern from the Google Cluster data~\cite{Reiss12:googlecluster}.
The job parameters are integers generated uniformly at random from the following intervals: $N_i\!\!\in\!\![1, 5]$, $F_i\!\in\![1000,6000]$, $\zeta_i\!\in\![50,500]$, and $b_i\in[100 \text{ Mbps}, 5 \text{ Gbps}]$.
The bandwidths between racks and servers are chosen from $[200 \text{ Gbps}, 3200 \text{ Gbps}]$~\cite{Scott:bandwidth} and $[10 \text{ Gbps}, 100 \text{ Gbps}]$ uniformly at random, respectively.
We adopt the Sigmoid utility function~\cite{Huang15:CORA}: $\mu_i(y_{is}[t]) = \frac{\lambda_{i,1}}{1+e^{-\lambda_{i,2}(\zeta_i\sum_{s\in\mathcal{S}}y_{is}[t]+\zeta_iz_{i,t-1}-\lambda_{i,3})}}$, where $\lambda_{i,1}\in[1,100]$ represents the priority of job $i$, $\lambda_{i,2} \!\in\!(0,1)$ is to represent the sensitivity of the job to the number of iterations, and $\lambda_{i,3}\in[300, 3000]$ represents the expected number of iterations that should be trained.

\smallskip
{\bf 2) Baselines:} We compare GADGET with the following representative job scheduling policies for computing clusters:

\begin{list}{\labelitemi}{\leftmargin=1em \itemindent=-0.09em \itemsep=.2em}

\item {\em FIFO in Hadoop and Spark~\cite{Zaharia10:Spark}:} Jobs are scheduled in the order of their arrivals with a fixed number of workers. 

\item {\em Dominant Resource Fairness Scheduling (DRF) in Yarn~\cite{Vavilapalli13:Yarn} and Mesos~\cite{Hindman11:Mesos}:} Jobs are scheduled based on their dominant resource share.

\item {\em Least Attained Service (LAS) in Tiresias~\cite{Gu19:Tiresias}:} Resource scheduling among active users in the cluster is conducted in a round-robin fashion across jobs, according to the total number of accelerator hours consumed by each job.
\end{list}
Since the above scheduling policies do not consider the underlying topology constraints, we place workers based on the simple heuristic that greedily allocates workers to servers. 
where a cycle can be attained.
For FIFO and Tiresias, the number of workers is fixed to a number within $[1,10]$.

\smallskip
{\bf 3) Experiment Results:}
Fig.~\ref{fig:utility} illustrates the comparisons of our GADGET algorithm to the above baselines.
We can see that GADGET algorithm significantly outperforms the baseline algorithms, and the gains in total utility value over the baselines become more pronounced as the number of jobs increases.
This shows that the dynamic resource allocation in GADGET achieves higher resource utilization than those of the static resource allocations (i.e., the number of workers remains fixed throughout the training process) in the baselines.

Next, we examine the impacts of node GPU and edge bandwidth resource capacities in the substrate network on the performance of GADGET.
The results are shown in Figs.~\ref{fig:NRF} and \ref{fig:ERF}, where one unit of node and edge resource in the y-axes represents (GPU=100, bandwidth=200Gbps).
We evaluate the ratio between the numbers of embedded jobs and active jobs in each time slot, and the results in Figs.~\ref{fig:NRF} and \ref{fig:ERF} are the average ratio over three trials. 
We can see that increasing the node and edge capacities in the cluster has a positive impact, which allows more jobs to be embedded in each time slot.
Intuitively, as the GPU capacity of each node increases, jobs with high bandwidth demands have a higher probability to be allocated using intra-server communication, which typically has a much larger bandwidth capacity than that of inter-server communication.
Also, the larger the GPUs number, the more jobs can be trained simultaneously.
Similarly, the larger the edge capacity, the more jobs with inter-server communications can be scheduled.


Lastly, we investigate the performance of our proposed G-VNE technique, which is a major component of our GADGET algorithm.
We evaluate the embedding performance in terms of the ratio between the total utility obtained by our algorithm and the optimal total utility.
The optimal utility of Problem~(\ref{problem:mcf}) at each time slot is computed using the global optimization solver Gurobi based on the branch-and-bound approach (of exponential complexity) \cite{Kobayashi20:branch_and_bound}.
The results are shown in Fig.~\ref{fig:ratio}.
We can see that the actual performance ratio is better than our theoretical bound in Theorem~\ref{thm:ratio}, which achieves 60\%--80\% of the optimal utility obtained by the global optimization solver.


\section{Conclusion}
\label{sec:conclusion}
In this paper, we studied online resource scheduling for the training of RAR-based DDL jobs in computing clusters.
We first developed an analytical optimization framework and then developed an online scheduling algorithm called GADGET
In GADGET, by showing the temporal submodularity of the online scheduling problem, we developed a greedy scheduling approach with competitive ratio guarantee.
Then, for the NP-Hard subproblem in each time-slot, we proposed a generalized virtual network embedding technique with approximation ratio guarantee.
Trace-driven simulations confirmed the superior performance of GADGET over existing schemes.~\footnote{The authors have provided public access to their code or data at \url{https://zenodo.org/record/5847644#.YervbxNKhTZ}.}

\clearpage
\bibliography{IEEEabrv,./BIB/JobScheduling, ./BIB/DMLF, BIB/ApproxAlg, ./BIB/CSMA}
%

\end{document}